\documentclass[12pt,a4paper]{article}
\usepackage[margin=1in]{geometry}
\usepackage{amsmath,amssymb,amsthm,mathtools}
\newcommand{\Cov}{\operatorname{Cov}}

\usepackage{booktabs,longtable}
\usepackage{enumitem}
\usepackage{caption,subcaption}
\usepackage{graphicx}
\usepackage{tikz}
\usetikzlibrary{arrows.meta,positioning}
\usepackage{authblk}
\usepackage{setspace}
\usepackage{parskip}
\usepackage{hyperref}
\usepackage{booktabs}
\usepackage{multirow}
\usepackage{siunitx}
\usepackage{float}
\hypersetup{colorlinks=true, linkcolor=blue, citecolor=blue, urlcolor=blue}

\usepackage[numbers]{natbib}

\newtheorem{theorem}{Theorem}[section]

\newtheorem{proposition}[theorem]{Proposition}
\newtheorem{assumption}{Assumption}[section]
\theoremstyle{definition}

\theoremstyle{remark}
\newtheorem*{remark}{Remark}

\title{Long-Run Sovereign Debt Composition:
An Analytic Ergodic Framework with Explicit Maturity Structure}
\author{Christopher Cameron
\thanks{The author is with the U.S. Department of the Treasury, Office of Debt Management. The 
analysis and 
conclusions set forth in this paper are those of the author, and do not necessarily reflect those of or 
indicate concurrence by other members of the Treasury staff, Treasury's senior officials, the Treasury Department, or the United 
States government.}
\\ \textit{christopher.cameron@treasury.gov}
}
\date{February 12, 2026}

\begin{document}
\maketitle
 
\begin{abstract}
This paper describes a discrete-time model of regularly-issued sovereign debt dynamics under a deficit-driven nominal 
debt growth regime that explicitly accounts for granular maturity. New 
issuance follows fixed allocations across a finite maturity ladder, and the government budget constraint determines total borrowing endogenously. In the 
deterministic baseline, we identify a sustainability condition for convergence to a steady-state and
derive closed-form steady portfolio shares, as well as key metrics for steady cost and risk (proxied as one-period
rollover ratio). Extending the model to a 
stochastic recurrence equation (SRE) driven by interest rates and (normalized) deficits that 
are stationary and mean-reverting, and
using a future-cashflow state representation of debt, we 
identify an analogous condition for ergodic convergence 
to a unique invariant distribution. This implies that metrics calculated by
Monte Carlo debt simulations driven by factors with these properties will recover the ergodic means of the underlying
system, independently
of initial conditions, provided the simulation horizon is sufficiently long. Analytical formulae for expectations of 
certain key metrics under this invariant
distribution are derived, and agreement with simulation is observed. We find that the 
introduction of stochastic interest-rate/deficit correlation into the framework 
leads to intuitive correction terms to their deterministic-baseline counterparts.
\end{abstract}

\section{Introduction}
\label{sec:intro}

The management of sovereign debt portfolios is a critical aspect of fiscal policy, balancing the trade-offs between minimizing interest cost and mitigating risks associated with refinancing, operational overhead, and interest rate fluctuations. Traditional analyses of debt sustainability (\citep{escolano2010}, \citep{cottarelli2012fiscal}) often either aggregate debt stock without distinguishing by maturity, overlooking the heterogeneous impacts of maturity structures on rollover needs and interest payments, or use stylized decompositions into short and long maturity buckets that elide the full maturity distribution. This 
aggregation may omit details of interest to debt managers seeking to understand the granular interest-cost and risk impact of maturity selection and
issuance-size adjustments.

Popular modeling frameworks (e.g. \citep{bolder2003}, \citep{bolder2011}, \citep{consiglio2012}, \citep{belton2018}) deploy 
numerical forward debt simulations in a stochastic framework, from which averaged 
metrics are calculated empirically, and tradeoffs between debt management goals  explored or optimized. Such models
allow for the ability to model issuance allocations and time-periods of any desired granularity, as well as to incorporate an arbitrarily detailed set of 
assumed macroeconomic factors and relationships as drivers. Because of the number of variables often involved, 
and the large dimensionality of the relationships being simulated, the convenience and flexibility of such models can 
come at the expense of clarity and intuition regarding the results.

In this paper, we describe an approach, first introduced in \citep{cameron2018} and \citep{landoni2019}, to analytically bridging the gap between aggregated theory on the one hand, and 
detailed but conceptually complex empirics on the other, using a disaggregated, discrete-time model of 
fixed-rate debt dynamics that explicitly accounts for the full maturity ladder. Our framework tracks new and outstanding debt by vintage and tenor, with new issuances allocated according to fixed 
proportions across maturities. Minimal issuance under the standard 
government budget constraint endogenously fixes total new borrowing to exactly cover primary deficits, interest 
payments, and maturing principal. The resulting dynamics, cast as recursion equations, allow us to analyze both 
deterministic and simple stochastic frameworks, providing insights into steady-state behaviors and risk profiles of debt portfolios under regular-issuance policies.

In the deterministic baseline, we find closed-form expressions for steady-state portfolio shares in an assumed deficit-driven regime of exponentially growing 
(nominal) primary deficits, equivalent to positing a sustainability condition analogous to $r<g$ (\citep{blanchard2019}). These 
shares depend solely on issuance weights and the deficit growth rate, and are independent of the coupon structure (provided
deficit-growth prevails). The steady-state portfolio also has closed-form formulas for yearly interest cost and risk (proxied as the 
one-period rollover fraction) that depend solely on the new-issuance allocation, steady coupons, 
and the deficit growth rate.

Extending to stochastic settings, we incorporate mean-reverting interest rates and primary deficits, modeled in this case 
as AR(1) processes, which can be correlated to capture countercyclical policy responses. By invoking Foster-Lyapunov drift conditions on 
the resulting random-coefficient stochastic recurrence equation (SRE)
(\citep{meyntweedie1993}, \citep{hairer2011yet}, \citep{bougerol1993products}), we 
establish, again under a sustainability condition that implies deficit-driven growth, ergodic 
convergence to a unique stationary distribution. Here, we use a state representation of ongoing debt issuance based on tracking all future cashflows, which preserves and exposes the underlying
linearity of the stochastic recurrence equation driving dynamics.

Ergodic convergence ensures that long-run pathwise time averages, as well as stochastic averages across multiple realizations, of the debt process driven 
in this way
align with its theoretical long-term invariant distribution. It follows that for numerical debt models with the same properties, observable metrics computed from 
long-horizon simulations should largely recover those of the invariant distribution, provided the simulation horizon is long enough for 
mean-reversion to dominate the effect of initial conditions.

Our results have direct implications for debt-management offices (DMOs) seeking to understand the long-term cost-risk tradeoff both theoretically, and for interpreting the empirical results of numerical debt simulation,
when contemplating issuance allocation and conveying the fundamental tradeoffs and relationships between maturity selection. The model also 
lays groundwork for incorporating floating-rate debt and other extensions, while also illuminating the difficulty in incorporating such 
economically realistic factors as endogenous
interest rates.

The literature on sovereign debt management is extensive but often focuses on continuous-time models or aggregate dynamics.
The discrete-time, full-maturity-ladder, and future-cashflow-representation nature of the
approach described here
complements these by providing tractable closed-forms that allow for incorporation of stochastic interest rate and deficit drivers, and ergodic 
analyses, while maintaining the granularity of the maturity profile, bridging theoretical rigor with practical application and computability.

The paper proceeds as follows: Section~\ref{sec:det} introduces the mathematical notation 
and presents the deterministic baseline model introduced in an earlier piece, derives the conditions for asymptotic convergence to a unique 
steady-state, and presents key steady-state
metrics. Section~\ref{sec:stoch} extends the model framework to incorporate stochastic, mean-reverting,
correlated interest-rates and derives the appropriate generalization for ergodic convergence to an invariant distribution. Formulae for 
the key metrics
from the deterministic baseline obtain appropriate correlation adjustments. Section~\ref{sec:numerics}
depicts numerical examples of the SRE model for a simple baseline parameter set, comparing and showing 
fidelity of the results of numerical Monte Carlo simulation to 
the theoretical invariant expectations of key metrics. Section~\ref{sec:concl} concludes.

\section{Deterministic Baseline}
\label{sec:det}

This section establishes the baseline deterministic model, defining the state variables, issuance rules, payment structures, and budget constraints. We derive conditions for sustainability under exponential deficit growth and obtain closed-form steady-state portfolio shares.

We largely follow the framework and notation introduced in \citep{cameron2018}. 

\subsection{Model Setup}

Time is modeled in discrete periods $t \in \mathbb{N}_0 = \{0, 1, 2, \dots\}$. The economy features a sovereign borrower issuing debt across a finite set of integer 
maturities $j = 1, 2, \dots, M$, where $M$ is the maximum maturity in periods. All quantities are nominal where applicable, unless otherwise stated.

\begin{assumption}[Debt State]
\label{ass:state}
The outstanding debt at the beginning of period $t$ is represented by the vector
\[
{Q}_t = (Q_{t,1}, Q_{t,2}, \dots, Q_{t,M})^\top \in \mathbb{R}^M,
\]
where $Q_{t,j}$ denotes the face amount of debt with  $j$ periods remaining to maturity at period $t$.
\end{assumption}

Debt evolves as existing obligations roll down the maturity ladder, and new issuances are added.

\begin{assumption}[Issuance Policy]
\label{ass:issuance}
Total new issuance $N_t$ in period $t$ is allocated across maturities according to fixed proportions:
\[
N_{t,j} = f_j N_t, \quad j = 1, \dots, M,
\]
where ${f} = (f_1, \dots, f_M)^\top$ does not depend on $t$, and satisfies $f_j \ge 0$ and $\sum_{j=1}^M f_j = 1$.
\end{assumption}

This fixed-proportion rule is a stylized debt-management issuance strategy. The steady linkage of assumed issuance allocations to the flow of 
new funding needs (as opposed to the stock of existing debt) represents a DMO that
follows a stable issuance practice (such as the US, with its `regular \& predictable' framework) rather than one that is more opportunistic and volatile
in its debt flows and debt-management practices, or which actively attempts to modify the outstanding portfolio to guide it toward target metrics.

\begin{assumption}[Coupon Payments]
\label{ass:coupon}
A bond issued at time $t$ with original maturity $j$ pays a fixed coupon $r_j \ge 0$ on its face value $N_{t,j}$  in periods $t+1$ through $t + j$, totaling $j$ payments.
\end{assumption}

Coupons in this model are assumed to be paid periodwise for simplicity (i.e. if the period is yearly, then
semiannual or other coupon schedules are considered to be rolled up by period). The coupon rate 
itself $r_j$ is assumed constant for each maturity but may vary across $j$ to reflect a term structure. The `yield curve' of rates 
$\{r_j\}_{j=1}^M$ can be thought of as the long-term steady yield curve assumption. Typically, this would be assumed to be upward-sloping (though
this is not essential to the basic framework); more will
be said about the importance and role of the assumed curve shape later.

Total interest payments in period $t$ are the sum of coupons on all outstanding debt issued 
in prior periods:
\begin{equation}
\label{eq:It_det}
I_t = \sum_{j=1}^M \sum_{s=t-j}^{t-1} r_j N_{s,j} 
 = \sum_{j=1}^M r_j f_j \left( \sum_{s = 1}^j N_{t-s}\right) =
 \sum_{j=1}^M \left( \sum_{k=j}^M r_k f_k \right) N_{t-j}.
\end{equation}

Maturing principal in period $t$ is the debt issued in prior periods that reaches zero remaining maturity at period $t$:
\begin{equation}
\label{eq:Mt_det}
M_t = Q_{t-1,1} = \sum_{j=1}^M N_{t-j,j} = \sum_{j=1}^M f_j N_{t-j}.
\end{equation}

\begin{assumption}[Government Budget Identity]
\label{ass:budget}
The government finances the primary deficit $D_t$ (representing any new financing-needs), 
interest $I_t$, and maturing principal $M_t$ through new issuance $N_t$:
\[
N_t = D_t + I_t + M_t.
\]
\end{assumption}

This identity closes the model endogenously, with $N_t$ adjusting to meet fiscal needs. Since the maturing term $M_t$ and interest cost $I_t$  depend linearly
on quantities $N_s, s < t$, this is a nonhomogenous linear recursion equation for $N_t$ driven
by the exogenous forcing-term $D_t$. 
\begin{equation} \label{eq:budget}
N_{t} = D_t + I_t + M_t = \sum_{j=1}^M \left(   ( \sum_{k=j}^M r_k f_k                                )   + f_j    \right) N_{t-j} + D_t
\end{equation}

\begin{assumption}[State Evolution]
\label{ass:state_evol}
The debt state updates as:
\[
Q_{t,M} = N_{t,M} = f_M N_t, \quad Q_{t,j} = Q_{t-1,j+1} + N_{t,j}  = Q_{t-1,j+1} + f_j N_{t}, \quad j = 1, \dots, M-1.
\]

\end{assumption}
In matrix form:
\[
{Q}_{t} = S {Q}_{t-1} + N_{t} {f}  = S {Q}_{t-1} +  {f} (D_t + I_t + M_t)
\]
where $S$ is the $M \times M$ shift matrix with $S_{j,j+1} = 1$ for $j=1,\dots,M-1$, and zeros elsewhere.

\subsection{Exponential Deficits and Growth-Regime}

To analyze long-run behavior, we impose structure on the primary deficit $D_t$. Our assumption here is the most general one that can still plausibly
admit of a self-similar portfolio and thus a steady-state in normalized quantities.

\begin{assumption}[Exponential Deficit Growth]
\label{ass:exp_def}
Primary (nominal) deficits grow exponentially:
\[
D_t = D_0 \gamma^t, \quad \gamma := 1+g > 1 \quad (g>0), \quad D_0 > 0.
\]
\end{assumption}

In 
classical debt models it may be assumed, or imposed as a sustainability constraint, that normalized quantities such as debt/GDP and/or primary
deficit/GDP converge to a sustainable long-term fixed quantity. We may connect with that literature by positing $g := \gamma-1$ as the assumed 
long-term steady GDP growth-rate. Our assumption about $D_t$ then merely represents deficit-growth that maintains stable debt/GDP. While 
this identification of $\gamma$ with the GDP-growth factor is not essential to the basic mathematical framework used here, we note this simply to 
observe that such an assumption about the trend of $D_t$ is (if only implicitly) not uncommon.

\subsection{Backward Recursion and Steady-State}

We now analyze the baseline recursion for $N_t$. First normalize all variables by the growth trend: $\tilde{N}_t = N_t / \gamma^t$, $\tilde{D}_t = D_t / \gamma^t = D_0$, $\tilde{{Q}}_t = {Q}_t / \gamma^t$.

Substituting into the budget identity~\eqref{eq:budget} and normalizing yields a recurrence for the normalized new-issuance $\tilde{N}_t$:
\[
\tilde{N}_t = D_0 + \sum_{j=1}^M f_j \gamma^{-j} \tilde{N}_{t-j} + \sum_{j=1}^M r_j f_j \sum_{s=1}^{j} \gamma^{-s} \tilde{N}_{t-s}.
\]

We are interested in whether the normalized system approaches a long-term steady limit, i.e. $\tilde{N}_t \to \tilde{N}_{\infty}$.  Such a limit must satisfy

\begin{align*}
\tilde{N}_{\infty} &= D_0 + \left( \sum_{j=1}^M f_j \gamma^{-j} + \sum_{j=1}^M r_j f_j \sum_{s=1}^{j} \gamma^{-s}  \right) \tilde{N}_{\infty} \\
  &= D_0 + \left(  \sum_{j=1}^M \left( \sum_{k=j}^M r_k f_k + f_j \right) \gamma^{-j}  \right) \tilde{N}_{\infty}
\end{align*}

Define the feedback function:
\begin{equation} \label{eq:feedback}
\Phi(\gamma, {r}, {f}) = \sum_{j=1}^M \left( \sum_{k=j}^M r_k f_k + f_j \right) \gamma^{-j}    = \sum_{j=1}^M f_j \gamma^{-j} + \sum_{j=1}^M r_j f_j \gamma^{-1} \frac{1 - \gamma^{-j}}{1 - \gamma^{-1}}.
\end{equation}

\begin{proposition}[Deficit-Driven Regime]
\label{prop:sustain}
The system defined by \ref{ass:state}--\ref{ass:exp_def} is 
deficit-driven with growth rate $\gamma$ if $\Phi(\gamma, {r}, {f}) < 1$, where $\Phi(\cdot)$ is given by~\eqref{eq:feedback}. In this case, the normalized issuance 
converges to $\tilde{N}_\infty = D_0 /(1- \Phi(\gamma, {r}, {f})) > 0$. 
In other words, for large $t$, new 
debt asymptotically grows at a rate $N_t \sim \tilde{N}_{\infty} \gamma^t$ 
that depends only on $\gamma$, the growth of deficits, not on interest rates, and
independently of initial conditions.
\end{proposition}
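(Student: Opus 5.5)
The normalized issuance $\tilde N_t$ satisfies a linear, constant-coefficient, nonhomogeneous recursion of finite order $M$:
\[
\tilde N_t = D_0 + \sum_{j=1}^M a_j \tilde N_{t-j}, \qquad a_j := \Big(\sum_{k=j}^M r_k f_k + f_j\Big)\gamma^{-j} \ge 0,
\]
with $\sum_j a_j = \Phi(\gamma,r,f)$. The plan is to treat this as a contraction. First I will check that the candidate fixed point $\tilde N_\infty = D_0/(1-\Phi)$ is well-defined and positive, which is immediate from $0\le \Phi<1$ and $D_0>0$. Then I will set $e_t := \tilde N_t - \tilde N_\infty$. Subtracting the fixed-point equation from the recursion eliminates $D_0$ and leaves the homogeneous recursion $e_t = \sum_{j=1}^M a_j e_{t-j}$.

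For the convergence step I will use a sup-norm contraction argument on windows of length $M$. Let $E_t := \max_{0\le i<M} |e_{t-i}|$. Because the $a_j$ are nonnegative, $|e_t| \le \Phi\, E_{t-1}$, and hence $E_t \le E_{t-1}$. Iterating over a block of $M$ consecutive steps, every entry of the new window is bounded by $\Phi E_{t-M}$, since each new entry is dominated by $\Phi$ times a maximum of entries that are already bounded by $E_{t-M}$. This gives $E_{t+M-1}\le \Phi E_{t-1}$, and therefore $E_t \le C\,\Phi^{\lfloor t/M\rfloor}E_{M-1}\to 0$ geometrically. An equivalent route is to note that the characteristic polynomial $1-\sum_j a_j z^j$ has no roots with $|z|\le 1$, because $|\sum_j a_j z^j|\le \Phi<1$ there. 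All characteristic roots of the companion matrix therefore lie strictly inside the unit disk. I will give the elementary window argument, because it makes the independence from initial conditions transparent: $E_{M-1}$ is the only place where the initial data $N_{-1},\dots,N_{-M}$ (equivalently $Q_0$) enter, and its influence decays.

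Finally I will translate back to the original variables. From $\tilde N_t\to\tilde N_\infty$ we get $N_t/\gamma^t \to \tilde N_\infty$, that is $N_t\sim \tilde N_\infty\gamma^t$. The growth factor is $\gamma$ alone; the coupons $r$ enter only through the level $\tilde N_\infty$, not the rate. It follows that $\tilde Q_t$, a finite linear combination of $\tilde N_{t-j}$ with weights $\gamma^{-j}$ from Assumption~\ref{ass:state_evol}, also converges. This is the sense in which the regime is ``deficit-driven''.

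The main obstacle is the bookkeeping of the initial conditions. The recursion at early $t$ involves pre-sample issuance, which has to be encoded as given initial values $\tilde N_{-1},\dots,\tilde N_{-M}$ derived from $Q_0$. Once that is done, the contraction is routine. A secondary point is that the nonnegativity of all $a_j$, which follows from $f_j\ge 0$ and $r_j\ge0$, is essential for the bound $|e_t|\le\Phi E_{t-1}$. I will state explicitly where it is used.
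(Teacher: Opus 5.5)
Your proof is correct, but it reaches $\rho<1$ by a different route from the paper.

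The paper stacks the lags into $X_t=(\tilde N_t,\dots,\tilde N_{t-M+1})^\top$ and writes $X_t=BX_{t-1}+d$ with the Leslie matrix $B=S+e_1b^\top$. It notes that the first row sum of $B$ is $\Phi<1$ while all other row sums equal $1$. It then invokes Perron--Frobenius for an \emph{irreducible} nonnegative matrix to get $\rho(B)<1$, so the homogeneous part decays and $D_0/(1-\Phi)$ is the attractor.

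You instead work with the scalar error recursion and get decay from a max-norm contraction over blocks of $M$ steps. The chain $|e_{t+k}|\le\Phi E_{t+k-1}\le\Phi E_{t-1}$ for $0\le k<M$ is exactly what is needed. Your backup, the triangle-inequality bound $|\sum_j a_j z^j|\le\Phi<1$ on $|z|\le 1$, is an equivalent spectral argument. Your version is elementary, gives an explicit rate $\Phi^{\lfloor t/M\rfloor}$, and does not need irreducibility. The Leslie matrix is irreducible only when $b_M>0$, i.e.\ $f_M>0$, which the paper's argument tacitly assumes (or must arrange by taking $M$ to be the longest tenor actually issued).

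The paper's matrix formulation is the one it reuses in Section~\ref{sec:stoch}. There, a Perron left eigenvector of the averaged companion operator supplies a one-step Lyapunov function for the drift condition. A contraction that only bites every $M$ steps would not slot in there as directly.

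The ``obstacle'' you flag is not really one, and your proposed encoding is not right in general. $Q_0$ records remaining rather than original maturity, and legacy debt need not have been issued under $f$, so pre-sample values $\tilde N_{-1},\dots,\tilde N_{-M}$ need not be recoverable from it. You do not need them. For $t\ge M$ all legacy cashflows have run off and the recursion involves only $\tilde N_0,\tilde N_1,\dots$. Starting the window at $e_0,\dots,e_{M-1}$ therefore handles arbitrary legacy debt, which is exactly what your bound in terms of $E_{M-1}$ already does.
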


\begin{proof}
The recurrence equation for $\tilde{N}_t$ is a linear homogeneous operator on prior values $\{N_s\}_{s<t}$ plus a constant term $D_0$. Asymptotic convergence under the condition on $\Phi$ (here and below we may suppress dependence on $\gamma, 
{r}, {f}$) can be shown directly by iterating the equation for
aggregate debt (see \citep{cameron2018}); $\Phi$ characterizes the largest root of the associated characteristic polynomial.

Here, to facilitate the discussion below, we will
make the linear recursion explicit by defining an augmented state:
\[
{X}_t = (\tilde{N}_t, \tilde{N}_{t-1}, \dots, \tilde{N}_{t-(M-1)})^\top \in \mathbb{R}^{M},
\]

The evolution of ${X}_t$ thus has the form:
\[
{X}_{t} = B {X}_{t-1} + {d}
\]
where  

\begin{itemize}
\item $d = D_0 {e}_1$, with $e_1 = (1,0,0,\dots)^T$, and 
\item the companion matrix $B$ is constant (in $t$), depending only on $\{r_j\}$, $\gamma$ and ${f}$:
\[
B =S + e_1 b^T,
\]
where $S = (\delta_{j+1,j})$ is an $M \times M$ shift operator containing only 
entries $1$ in the first subdiagonal, 
and $b_j =  (\sum_{k=j}^M r_k f_k + f_j) \gamma^{-j}$. (Note $B$ is a Leslie matrix, as arises
in population ecology; many results from that field will obtain in a laddered-issuance debt model.)
\end{itemize}

The condition $\Phi < 1$ means that the first-row sum of $B$, i.e. $\sum b_j$, is $<1$ (while the other row sums are 
all $1$ because $S$ is just a shift operator). $B$ is nonnegative and irreducible, so
by an application of the Perron-Frobenius theorem, the $\Phi<1$ condition means that
the spectral radius of $B$ satisfies $\rho(B)<1$. This ensures 
that the homogeneous solution to the recursion decays from any initial condition, and 
that the particular solution dominates as steady attractor. As we have seen, the limit of $X_{t,1} = \tilde{N}_t$
can be written $\tilde{N}_{\infty} = D_0 / (1-\Phi)$. 
\end{proof}

We will see later that condition~\ref{prop:sustain} is a disaggregated analogue of "$r<g$" in traditional debt-sustainability 
literature (i.e. if we identify $g=\gamma-1$ with the GDP growth rate).

\subsection{Steady-State Portfolio Shares}

The normalized state $\tilde{Q}_t$ evolves as:
\[
\tilde{Q}_{t,M} = f_M \tilde{N}_{t}, \quad \tilde{Q}_{t,j} = \gamma^{-1} \tilde{Q}_{t-1,j+1} + f_j \tilde{N}_{t}, \quad j < M.
\]

(Note that $\tilde{Q}_{t-1,1}$ is the (normalized) outstanding amount with one period remaining and so it matures at step $t$.)

In steady state, $\tilde{Q}_{t,j} \to \tilde{q}_j$, yielding:
\[
\tilde{q}_M = f_M \tilde{N}_\infty, \quad \tilde{q}_j = \gamma^{-1} \tilde{q}_{j+1} + f_j \tilde{N}_\infty,
\]
where we recall $\tilde{N}_\infty = D_0 / (1-\Phi)$.

Solving backwards:
\begin{equation}\label{eq:qlevels}
\tilde{q}_j = \tilde{N}_\infty \sum_{k=j}^M \gamma^{j-k} f_k.
\end{equation}

The aggregate normalized debt in steady-state is:
\begin{equation} \label{eq:edebtlevel}
\tilde{q} = \sum_{j=1}^M \tilde{q}_j = \tilde{N}_\infty \sum_{k=1}^M f_k \frac{1 - \gamma^{-k}}{1 - \gamma^{-1}}.
\end{equation}

Thus, the steady-state portfolio shares are:
\begin{equation}
\label{eq:shares_det}
\theta_j = \frac{\tilde{q}_j}{\tilde{q}} = \frac{\sum_{k=j}^M f_k \gamma^{j-k}}{\sum_{k=1}^M f_k \frac{1 - \gamma^{-k}}{1 - \gamma^{-1}}}.
\end{equation}

\begin{proposition}[Convergence to Steady-State Shares]
\label{thm:shares}
For the system defined by \ref{ass:state}--\ref{ass:exp_def}, under the assumption 
$\Phi(\gamma,{r},{f}) < 1$, portfolio shares $Q_{j,t} / Q_t$ converge to
the values $\theta_j$ defined in \eqref{eq:shares_det}.
\end{proposition}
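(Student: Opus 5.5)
The plan is to reduce everything to the convergence $\tilde N_t \to \tilde N_\infty$ already established in Proposition~\ref{prop:sustain}. Shares are scale invariant: $Q_{t,j}/Q_t = \tilde Q_{t,j}/\tilde Q_t$, where $\tilde Q_t=\sum_j \tilde Q_{t,j}$. So it suffices to show that $\tilde Q_{t,j}\to \tilde q_j$ for each $j$. The ratio then converges to $\theta_j$ by continuity of division, because the limit denominator $\tilde q$ in \eqref{eq:edebtlevel} is strictly positive: $\tilde N_\infty>0$, $\sum_k f_k=1$, and each factor $(1-\gamma^{-k})/(1-\gamma^{-1})\ge 1$.

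First I will unroll the normalized state recursion. Since $\tilde Q_{t,M}=f_M\tilde N_t$ and $\tilde Q_{t,j}=\gamma^{-1}\tilde Q_{t-1,j+1}+f_j\tilde N_t$, a finite induction downward in $j$ gives, for every $t\ge M$,
\[
\tilde Q_{t,j} \;=\; \sum_{k=j}^{M} \gamma^{-(k-j)} f_k \,\tilde N_{t-(k-j)} .
\]
The interpretation is that debt with $j$ periods remaining at time $t$ is exactly the tenor-$k$ issuance made $k-j$ periods earlier. This expression follows directly from Assumption~\ref{ass:state_evol}, since debt never lives longer than $M$ periods. Consequently, once $t\ge M$ the state depends only on the last $M$ issuances, and the initial state $Q_0$ has been completely flushed out.

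Second, I take limits. Each term $\tilde N_{t-(k-j)}$ has a fixed lag $k-j\le M-1$, so each tends to $\tilde N_\infty$ by Proposition~\ref{prop:sustain}. The sum is finite, so $\tilde Q_{t,j}\to \tilde N_\infty\sum_{k=j}^M\gamma^{j-k}f_k=\tilde q_j$, which matches \eqref{eq:qlevels}. Summing over $j$ gives $\tilde Q_t\to\tilde q$, and dividing yields $\theta_j$ as in \eqref{eq:shares_det}. I should also check that $\tilde N_\infty$ cancels in the ratio. This confirms the claim that the shares depend only on $f$ and $\gamma$.

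The only real obstacle is making sure the convergence of $\tilde N_t$ is genuinely available, including the initialization and positivity. That convergence comes from $\rho(B)<1$ via Perron--Frobenius in the proof of Proposition~\ref{prop:sustain}. If the initialization in that proof (how $N_s$ for $s<0$, or $Q_0$, enters) were ambiguous, I would note that with $\rho(B)<1$ the homogeneous part $B^t(X_0-X_\infty)$ decays geometrically from any $X_0$. This makes the argument independent of initial conditions and gives geometric convergence of the shares as well. Positivity of $\tilde Q_t$ for large $t$ follows from its convergence to $\tilde q>0$, so the shares are eventually well defined.
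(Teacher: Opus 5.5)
Your proposal is correct and follows the same approach as the paper: you show $\tilde Q_t \to \tilde q$ from the normalized state recursion, driven by the convergent input $\tilde N_t \to \tilde N_\infty$ of Proposition~\ref{prop:sustain}, and then pass to the ratio $\tilde Q_{t,j}/\sum_k \tilde Q_{t,k}$. Your explicit unrolling makes precise the paper's one-line appeal to the ``contraction property'' of the backward recursion: because the shift is nilpotent, for $t\ge M$ each $\tilde Q_{t,j}$ is an exact finite weighted sum of the last $M$ normalized issuances. Your check that $\tilde q \ge \tilde N_\infty > 0$ supplies a positivity detail that the paper leaves implicit.
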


\begin{proof}
The normalized $\tilde{{Q}}_t$ converges to ${q}$ by the contraction property of the backward recursion with factor $\gamma^{-1} < 1$. The 
ratios $Q_{j,t} / Q_t = \tilde{Q}_{j,t} / \sum_k \tilde{Q}_{k,t}$ then converge to $\theta_j$, independent of initial conditions.
\end{proof}

\begin{remark}
The shares \eqref{eq:shares_det} depend only on ${f}$ and $\gamma$, not on $r$. Coupons influence the condition for deficit-driven growth,
and potentially sustainability, via $\Phi(\gamma,{r},{f})<1$, but do not affect the long-run portfolio composition in deficit-driven regimes.
\end{remark}

This deterministic baseline provides a benchmark for the stochastic extensions, for which
uncertainty introduces probabilistic risk dimensions. In the following sections
we observe some of its basic cost and risk properties.

\subsection{Steady Interest-Cost Ratio}
\label{WACsection}
In the deterministic steady-state, the next-period interest-cost, as a fraction of current-face outstanding, is just the steady weighted-average coupon. Because
rates $r_j$ are constant by original tenor, this can be written:
\[
WAC = \sum_{j=1}^M w_j r_j
\]
where the weights $w_j$ represent 
current-outstanding by \textit{original} maturity, rolling up the $j$ prior periods of new issuance at tenor $j$. 

Writing $w_{t,j}$ to represent
the fraction of outstanding debt at period $t$ with original-maturity $j$ in steady-state, we must have
\[
w_{t,j} \propto \sum_{s=t-j}^{t-1} f_j N_s = f_j \sum_{s=t-j}^{t-1} \gamma^s \tilde{N}_s
\sim \tilde{N}_{\infty} f_j \sum_{s=t-j}^{t-1} \gamma^s
 \]

After normalizing and cancellation we obtain explicit values for the portfolio weights that determine steady WAC:
\begin{equation} \label{eq:wac}
WAC = \sum_{j=1}^M w_j r_j, \; w_j = \frac{ f_j ( 1 - \gamma^{-j} ) }{ \sum_{k=1}^M f_k (1-\gamma^{-k})}
\end{equation}

\subsection{Steady Rollover Fraction}

The fraction of the deterministic baseline portfolio that rolls over each period is just $\theta_1$. Using $w_j$ as defined above, and substituting
into express \eqref{eq:shares_det} for $\theta_1$,  this can be expressed
\begin{equation} \label{eq:rr}
\theta_1 = \frac{q_1}{q} = \frac{\sum_{k=1}^M f_k \gamma^{1-k}}{\sum_{k=1}^M f_k \frac{1 - \gamma^{-k}}{1 - \gamma^{-1}}}
=  \frac{\sum_{k=1}^M f_k \gamma^{-k} }{ \sum_{k=1}^M f_k \frac{1 - \gamma^{-k}}{\gamma - 1}}
= \sum_{j=1}^M w_j \tau_j
\end{equation}
with $\tau_j = (\gamma-1)/(\gamma^j-1)$, and $w_j$ as defined for $WAC$,
\[
 w_j = \frac{ f_j ( 1 - \gamma^{-j} ) }{ \sum_{k=1}^M f_k (1-\gamma^{-k})}.
\]

(It will become apparent later why it is beneficial to express $\theta_1$ in terms of $w$.)

\subsection{Relationship to Sustainability Condition}

Using the preceding steady quantities,  algebraic manipulation shows
the feedback function $\Phi(\gamma,{r},{f})$ can 
be rewritten as simply $\Phi = (\theta_1+WAC)/(\theta_1+g)$. Hence condition~\ref{prop:sustain} for a deficit-driven regime with a unique steady-state
is equivalent to
\begin{equation} \label{rvg}
\Phi(\gamma,{r},{f}) =  \frac{ \theta_1 + WAC }{\theta_1 + g} < 1 \quad \iff \quad  WAC < g.
\end{equation}
This reveals the condition for being in a deficit-driven regime as the (intuitive if not 
tautological) statement that long-term deficits grow faster than steady-state compounded interest. As 
noted previously this is effectively the sustainability condition "$r < g$" that arises in debt literature, albeit to make the linkage sharp we would need to 
additionally identify $g$ with GDP growth as well (in our baseline framework it merely denotes deficit-growth).

In other words, if deficits grow at the same rate as GDP (no fiscal blowup), and the interest rates are such that the
sustainability condition $r<g$ is satisfied, 
where $r:=WAC$ is the steady, long-term average interest rate on the portfolio, then 
debt growth is deficit-driven, growing at the same asymptotic rate as GDP, and further, has a unique asymptotic steady-state from any initial condition.

Figure~\ref{fig:phi1} illustrates the relationship between $r, f$, and $g$ (or $\gamma$) vs. $\Phi$ for simple choices of new-issue allocations.
\noindent
\begin{figure}[H]
\centering
\includegraphics[scale=.4]{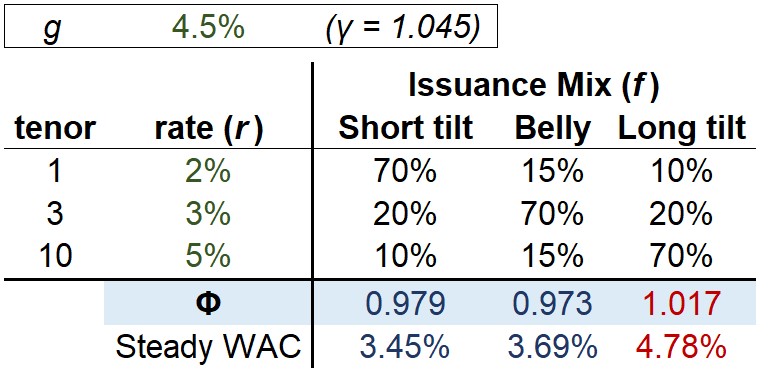}
\caption{\footnotesize Feedback function $\Phi$ and steady $WAC$ formula
calculated for $\gamma=1.045$, $r = (.02,.03,.05)^T$ under various issuance allocations. In
this example the long-tilted allocation leads to $\Phi > 1$ (the formula for $WAC$ leads to $WAC>g$), and so 
debt dynamics are interest-driven rather than deficit-driven. The other
two allocations satisfy $\Phi<1$ for this $\gamma$, thus attain steady-state under deficit-driven asymptotics.}
\label{fig:phi1}
\end{figure}

From the form~\eqref{rvg} of our deficit-driven (sustainability) condition we can also observe the intuitive result that, for given values of $WAC<g$, larger 
values of $\theta_1$ - that is, higher yearly rollover fractions - cause the portfolio to be closer to breaching its sustainability (deficit-driven steady-state) boundary, and 
convergence/pullback to the long-term trend would naturally be expected to be slower and more volatile.

\subsection{Optimization and Frontier}

In this section we describe how the preceding baseline metrics can be used to set up a simple if idealized portfolio selection problem.

Recall the expression for steady one-period rollover fraction,
\[
\theta_1 = \sum_{j=1}^M w_j \tau_j
\]
with $\tau_j = (\gamma-1)/(\gamma^j-1)$.

Note $\tau = \{ \tau_j \}$ is uniformly decreasing in $j$. Compare this 
form of $\theta_1$ with equation \eqref{eq:wac} for $WAC$:
\[
WAC = \sum_{j=1}^M w_j r_j
\]
Typically, $r$ used here will be upward-sloping. One then sees in the opposite slopes of $r$ and $\tau$ the 
classical cost-risk tradeoff.  Assuming cost is proxied by WAC and risk by $\theta_1$, the portfolio selection 
problem reduces to selecting $f$  to minimize $WAC$, subject to a constraint on $\theta_1$. 
Given the preceding formulas, and  
because
$f$ and $w$ differ only by a rescaling and renormalization (any $f$ can be uniquely recovered from its corresponding $w$ and
vice versa),
this reduces to solving for optimal accumulation weights $w$ to balance minimizing a weighted-sum of 
$r$ against keeping a weighted-sum of the downward-sloping $\tau$ less than a 
risk-tolerance. 

When framed this way, optimal issuance allocation from a steady-state viewpoint reduces to a simple linear programming 
problem. If $r$ is upward-sloping and not too concave, as would be typical of any plausible long-term yield curve assumption, then absent further constraint the optimal 
solution will inevitably place
maximum issuance weight on some intermediate or "belly" tenor
$j^*$ (or a pair of neighbor tenors). Namely, optimization will concentrate issuance on the smallest tenor (to minimize cost) such that 
concentrated-issuance at that ``sweet spot'' tenor, or a pair of neighboring tenors if the optimal tenor is not an integer, maintains $\theta_1$ at the prescribed level. Indeed, it 
can be shown under these conditions that the optimal tenor is just
\[
j^* =\frac{ \log( 1+ (\gamma-1)/R)}{\log \gamma} \approx 1/R,
\]
where $R$ is the risk tolerance for $\theta_1$, i.e. $\theta_1\le R$ is the constraint. (See \citep{cameron2018}, \citep{landoni2019}.)

If nontrivial constraints are applied (such as lower- and upper-bounds on the weights $f_j$) then 
some blended allocation within the feasible set, and best approximating this $j^*$, meaning issuance concentrated toward the ``belly'' rather than short
or long, will be the optimum. 

If, alternatively, $r$ does not have an upward-sloping structure then less can be said in general, as the simplex optimum would depend on the precise details of $r$. For example, if $r$ is ``barbelled'' (such that $r_1$ and $r_M$ are small while intermediate $r_j$ are larger), or highly concave, then one can find it optimal for issuance to follow a ``barbell'' pattern as well, with higher weight on $f_1$ and $f_M$ and lower or none at intermediate tenors. (Again 
see \citep{cameron2018}.) Such
an
assumption for $r$ would of course not be typical or parsimonious in debt modeling.

We add that absent further assumptions or established facts about the dynamics of the debt portfolio, the portfolio selection framework as described in this
section
is highly idealized, in effect answering the hypothetical 
question, if rates and (normalized) deficits were constant, which allocation $f$ would, if permanently deployed, lead to the optimal steady portfolio. While such a framework may seem overly simplified, 
we will find in the sequel that in a simple extension to a stochastic framework, to some extent many of the mathematical consequences of this baseline case can be
carried over and remain applicable 
to more realistic frameworks and models, with only modest adjustment.

\subsection{No-Growth Limit $\gamma \to 1^+$}

It is instructive to examine the asymptotic limits of these metrics and observations for the limiting case $\gamma \to 1^+$. 
We will show that the above steady metrics reduce to simple and intuitive accumulation quantities.

The steady portfolio fractions $\theta_j$ reduce, via $(1-\gamma^{-k})/(1-\gamma^{-1}) \to k$, to
\[
\theta_j \to \frac{\sum_{k=j}^M f_k}{\sum_{k=1}^M k f_k}
\]

Thus, the steady portfolio-share with tenor $j$ (under no deficit-growth) is just proportional to the fraction 
of new-issuance at/above tenor $j$. (The more realistic 
case $\gamma > 1$ applies a correction to this expression that preferentially weights more-recent issuance.)

For WAC, a similar reduction shows
\[
WAC \to \sum_j w_j r_j, \quad w_j = j f_j / \sum_k k f_k
\]
Note this weights each issuance fraction $f_j$ by how many periods the bucket remains in the portfolio, and then normalizes. Issuances of tenor $j$
from $j$ prior periods are present in the current portfolio, so $r_j$ must be counted $j$ times.

And for 1-period rollover, we can read off from above,
\[
\theta_1 \to \sum_k f_k / \sum_k k f_k = 1 / \sum_k k f_k.
\]
Note this is just the reciprocal of the new-issue weighted average maturity (NWAM). Unsurprisingly, shorter-/longer-tilted new-issuance leads to a steady
portfolio with higher/lower steady periodwise rollover fraction.

And as we have seen, there is an optimal ``belly'' tenor given risk tolerance $R$, which as $\gamma \to 1$ 
approaches $j^* \to 1/R$.

These forms of steady accumulation metrics can be found in e.g. \citep{landoni2019} and in 
use in \citep{tbac2018q4}, \citep{tbac2019q2}. Note that
technically these simplified formulas only 
remain representative of the unique steady-state under a particular asymptotic limit $\gamma \to 1, r \to 0$ that 
maintains $\Phi(\gamma,{r},{f})<1$. 

If the condition $\Phi<1$ does not hold, for example for $\gamma$ sufficiently near $1$ ($g \approx 0$), debt growth will instead generally be
interest-rate driven. Moreover there need not be a
single steady-state, as the spectral radius condition $\rho(B)<1$ does not hold, so we do not have asymptotic convergence to a particular solution. There can be periodic or 
otherwise non-converging debt dynamics, and those dynamics can depend on initial values.

Finally, if there is a steady-state in the non-deficit-driven case, alternate 
formulas for steady portfolio metrics will apply, generally 
having the same form as those given above, but with $\gamma$ replaced by $\beta := 1+WAC$, where $WAC$ is the
endogenously-determined steady interest cost. We omit the details here as outside the scope of this piece.

\section{Stochastic Extension}
\label{sec:stoch}

In this section we incorporate stochastic interest rates and primary deficits to better capture real-world uncertainties, allowing for 
correlations that reflect economic cycles. The basic underlying structure of the model is the same as and builds on that developed in section~\ref{sec:det}. 

\subsection{Stochastic Model Setup}

$N_t$, now stochastic, is still determined via the budget equation:
\[
N_t = D_t + I_t + M_t,
\]

Interest payments are
\[
I_t = \sum_{j=1}^M \sum_{k=1}^j r_{t-k,j} N_{t-k,j} = 
\sum_{j=1}^M f_j \left(  \sum_{k=1}^j r_{t-k,j} N_{t-k}  \right)
\]

where $r_{t,j}$ represents the $j-$period interest rate applicable to debt issued in period $t$. Because it is random and varies from one period to the next we cannot factor it out of the inner sum.

Maturing principal remains a summation of past issuance, now stochastic:
\[
M_t = \sum_{j=1}^M N_{t-j,j} = \sum_{j=1}^M f_j N_{t-j}
\]

The following sections will analyze this system under simple but representative assumptions about the evolution of $r_{t,j}$ and $D_t$.

\begin{assumption}[Interest Rates]
\label{ass:rates}
Each maturity-specific rate follows an AR(1) process:
\[
r_{t,j} = \bar{r}_j + \phi_j (r_{t-1,j} - \bar{r}_j) + \epsilon_{t,j}, \quad \epsilon_{t,j} {\sim} \mathcal{N}(0, \sigma_j^2), \quad 0 < \phi_j < 1.
\]
\end{assumption}

Rates here are independent across maturities for simplicity, though extensions could include cross-correlations.

\begin{assumption}[Deficits]
\label{ass:deficits}
Deficits are normalized via $D_t =\tilde{D}_t \gamma^t$, and the normalized deficit process follows:
\[
\tilde{D}_t = \bar{D}_0 + \psi (\tilde{D}_{t-1} - \bar{D}_0) + \eta_t, \quad \eta_t {\sim} \mathcal{N}(0, \varsigma^2), \quad 0 < \psi < 1,
\]
\end{assumption}

This allows fluctuations around an exponentially-growing trend, with persistence $\psi$.

\begin{assumption}[Correlations]
\label{ass:corr}
The innovations $\eta_t$ and $(\epsilon_{1,t}, \dots, \epsilon_{M,t})^\top$ are correlated as $\Sigma_j = \Cov(\eta_t, \epsilon_{j,t}) = \rho \varsigma \sigma_j$, $\rho \in (-1,1)$.
\end{assumption}

A countercyclical assumption of negative correlations ($\rho < 0$) may be seen as the economically realistic one, as it
models scenarios in which high deficits (e.g., during recessions) tend to coincide
with low rates (e.g., due to monetary policy and/or safe-haven flows), dampening fiscal stress. (A more general extension of this framework could allow for a fuller correlation 
matrix; here we posit scalar parametrization for simplicity.)

The system for $N_t$ is now a stochastic recurrence equation (SRE):
\[
{N}_t = {D}_t + {I}_t+{M}_t = 
\sum_{j=1}^M f_j \left[ \left(  \sum_{k=1}^j r_{t-k,j} N_{t-k}  \right) + N_{t-j} \right] +  {D}_t 
\]

This expresses the period-$t$ new issuance amount $N_t$ in terms of new deficits and prior 
values $N_{t-k}$, as well as prior-period 
interest rates.  The rates $r_{t-k,j}$ are not constant
but realized at time of issued, and so must be retained to fully describe this recursion. One could approach this SRE by defining an augmented-state that   involves not just $N_t$ and its lags $N_s$, but prior rates $r_{s,\cdot}$ as well. Because of 
the interaction terms $r_{t-k,j} N_{t-k}$, the system thus framed is not linear as in the
deterministic baseline case; it has a bilinear form and a direct analysis would require the theory of nonlinear Markov processes.

However, under an alternative state representation we can represent the system as a linear SRE.

\subsection{Future Cashflow Representation}
\label{sec:sre}

Rather than using $N_t$ (and lags) as state variables, let outstanding debt be defined by the full collection of its \textit{future cashflows}.

\begin{assumption}[Future Cashflows]
\label{ass:cfs}
Denote by ${Y}_t := (P_t, C_t)^T \in \mathbb{R}^{2M}$ the vector of all known and fixed 
future cashflow obligations, principal ($P_t$) and coupon ($C_t$) payments, owing to oustanding debt
at period $t$. That is, the quantity $P_{t,j}  (j = 1, \dots, M)$ is the total principal amount obligated in period $t+j$, and $C_{t,j}$ the total coupon amount, due only to outstanding debt at period $t$.
\end{assumption}

Debt evolution in this framing now consists of a) rolling  both components of the outflow vector ${Y}_t$ forward, and b) adding the effect of period-$t$ new issuance $N_t$. This
can be written
\[
P_t = S P_{t-1} + N_t f
\]
\[
C_t = S C_{t-1} + N_t R_t f 
\]
with $N_t$, ${f}$ representing the period-$t$ new-issuance amount and issue allocation (respectively), as before; $S = \delta_{j,j+1}$ an
upper-diagonal shift matrix; and $R_t$ 
the matrix of coupon streams with rates struck in period $t$:
\[
R_t := 
\begin{pmatrix}
r_{t,1} & r_{t,2} & r_{t,3} & \dots & r_{t,M} \\
0 & r_{t,2} & r_{t,3} & \dots & r_{t,M} \\
0 & 0 & r_{t,3} & \dots & r_{t,M} \\
\vdots & & \ddots & & \\
0 & \dots & & 0 & r_{t,M}
\end{pmatrix}
= 
\begin{pmatrix}
1 & 1 & \dots \\
0 & 1 & \dots \\
\vdots & & \ddots 
\end{pmatrix} \cdot diag(r_t) = U \cdot diag({r}_t)
\]

Thus
\[
{Y}_t = S' {Y}_{t-1} + N_t R'_t {f}
\]
with
\[
S':=\begin{pmatrix} S & 0 \\ 0 & S \end{pmatrix}, \quad R'_t:= ( I, U \cdot diag({r}_t))^T
\]

Because the first entries of $P_{t-1}$ and $C_{t-1}$ contain precisely the amount of existing 
principal and coupon cashflows due in period $t$ (i.e. $M_t+I_t$ in our prior
formulation), the
new-issue amount $N_t$ in turn can be written linearly in the previous state,
\[
N_t = P_{t-1,1} + C_{t-1,1} + D_t =e^T {Y}_{t-1} + D_t
\]
where $e = ( e_1^T \; e_1^T )^T$.

Thus the recurrence is simply
\[
{Y}_t = B_{t} {Y}_{t-1} + d_t
\]
where $B_t$ and $d_t$ are stochastic and depend on rate/deficit data at time $t$, \textit{but not on the state} ${Y}_{t-1}$:
\[
B_t := S'+ R'_t f e^T
\]
\[
d_t := D_t R'_t f
\]
This is a \textit{linear} stochastic recurrence equation (SRE) for ${Y}_t$, with random coefficients $B_t$, and 
driven by the exogenous stochastic terms $({r}_t, D_t)$. (See e.g. (\citep{brandt1986}.) Taken together, the augmented 
series $(Y_t, {r}_t, D_t)$ is a Markov process. Of course, because the forcing term $d_t$ depends on $D_t$ which grows like $\gamma^t$, this system is unbounded
and so we will need to normalize to analyze further.

\subsection{Normalized Linear SRE and Ergodic Convergence}
\label{sec:erg}

As in the deterministic case, define normalized quantities $\tilde{P}_t = \gamma^{-t} P_t, \tilde{C}_t = \gamma^{-t} C_t, {\tilde{Y}}_t = \gamma^{-t} {Y}_t$ and so on. The
SRE of  section~\ref{sec:sre} becomes
\begin{equation}\label{normalizedSRE}
\tilde{{Y}}_t = \tilde{B}_t \tilde{{Y}}_{t-1} + \tilde{d}_t
\end{equation}
where $ \tilde{B}_t  = \gamma^{-1} B_t = \gamma^{-1}  (S'+ R'_t f e^T )$ and $\tilde{d}_t = \gamma^{-t}  D_t R'_t f = \tilde{D}_t R'_t f$. Now the forcing term $\tilde{d}_t$ is stationary, 
with finite mean and bounded moments, because it depends only on AR(1) processes: (normalized) deficits $\tilde{D}_t$ and rates $r_t$. The companion matrix $\tilde{B}_t$ is 
stochastic and depends on rates, but is independent of the state $\tilde{Y}_s$.

The normalized linear SRE~\eqref{normalizedSRE} is now in a form that allows us to establish ergodic convergence to an invariant distribution, given
a sustainability-like condition, using 
standard Foster-Lyapunov drift conditions (\citep{meyntweedie1993},\citep{hairer2011yet}).

Consider the deterministic, nonnegative expected-value dynamics $Z_t = C Z_{t-1} + b$, where $C:=E(|\tilde{B}_t|)$ and $b:=E(|\tilde{d}_t|)$. Both
$C$ and $b$ 
are constant, due to the stationarity of $\tilde{D}_t$ and $r_t$. (We emphasize that this recursion equation is merely an 
associated equation introduced for the purpose of establishing 
the required sustainability condition, and not descriptive of debt dynamics itself.) From 
the structure of $|\tilde{B}_t|$ it is clear
this system describes the dynamics of deterministic, iterated debt-issuance as described in section~\ref{sec:det}, but with interest-rates fixed at their expected absolute-values $r^*_j := E(|r_{\cdot, j}|)$.  

\begin{remark}
For completeness, even though in the deterministic case we assumed $r_j \ge 0$, here we allow for a distinction between $\bar{r}_j = E(r_j)$ and $r^*_j = E(|r_j|)$ in recognition of the fact that by driving the rate processes
with Gaussian innovations, they may become negative.  Because of this, the sustainability condition must be written in terms of $E(|r_j|)$ rather than $E(r_j)$, making it a slightly stronger constraint. This 
distinction could be removed by stipulating a lognormal or other process for $r_j$ that remains nonnegative; here we have postulated Gaussian 
innovations for ease of presentation.
\end{remark}

As we saw, such a 
system converges to a unique asymptotic steady-state in general if  a sustainability condition holds, which in this context now takes the form
\begin{equation}\label{eq:feedbackstochastic}
\Phi(\gamma, {r^*}, {f}) = \sum_{j=1}^M \left( \sum_{k=j}^M r^*_k f_k + f_j \right) \gamma^{-j}    = \sum_{j=1}^M f_j \gamma^{-j} + \sum_{j=1}^M r^*_j f_j \gamma^{-1} \frac{1 - \gamma^{-j}}{1 - \gamma^{-1}} < 1.
\end{equation}
Under this condition, the homogenous solution must decay. In the future-cashflow space of the dynamics for $Z_t$ this must equivalently mean the spectral radius is contractive, $\rho(C)<1$. By Perron-Frobenius on the 
nonnegative operator $C$, there then exists a 
nonnegative left eigenvector $v$ of $C$ with $v^T C = \rho(C) v^T$, $\rho(C)<1$. 

Use this $v$ to define a Lyapunov function on the original system, $V(Y) = v^T |Y|$. By construction,
\[
E( V( \tilde{Y}_{t+1} ) | Y ) \le v^T E(  |\tilde{B}_t| ) |Y|   + E(\tilde{d}_t) = v^T C |Y| + v^T b = \rho(C) v^T |Y| + b'  = \rho(C) V(Y) + b'. 
\]
where $b'$ is a constant, and $\rho(C)<1$. The Foster-Lyapunov drift criterion is thus satisfied.

\begin{remark} (Minorization).  Because the exogenous drivers $r_t, \tilde{D}_t$ include Gaussian innovations with full support on
$\mathbb{R}^k$, their conditional distribution given any past state, thus the affine term $\tilde{d}_t$, and therefore the transition 
kernel of the update map for $\tilde{Y}_t$, 
admits a density that is strictly positive on every compact set.
Therefore, every sufficiently large compact sublevel set of the Lyapunov function is a small set in the sense of Meyn–Tweedie, establishing
 the required minorization condition as well.
\end{remark}

These drift and minorization conditions establish that the system converges ergodically.

\begin{proposition}[Deficit-Driven Regime and Ergodicity]
\label{prop:stoch_sustain}
Under the sustainability condition $\Phi(\gamma, E(|r|), {f}) < 1$ (equivalently, $WAC( E(|r|), f, \gamma) < g$), the normalized debt process $\tilde{Y}_t$ 
driven by exogenous primary-deficits growing at rate $\gamma$ converges ergodically
to a unique invariant measure $\pi$.
\end{proposition}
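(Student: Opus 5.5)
The plan is to treat the augmented process $W_t := (\tilde{Y}_t, r_t, \tilde{D}_t)$ as a time-homogeneous Markov chain on $\mathbb{R}^{2M}\times\mathbb{R}^{M}\times\mathbb{R}$ and to apply the Meyn--Tweedie / Hairer--Mattingly form of Harris' theorem. That theorem requires three inputs: a geometric drift condition, a minorization on a sublevel set, and aperiodicity. Together they give a unique invariant measure and geometric convergence of the law of $W_t$ in a weighted total-variation norm from any initial condition. The marginal of that measure on the $\tilde{Y}$ coordinate is then the $\pi$ of the statement.

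\textbf{Drift.} The drift computation preceding the statement is conditional only on $\tilde{Y}$. The drivers $\tilde{B}_t$ and $\tilde{d}_t$ depend on $r_t$, and $r_t$ is correlated with $r_{t-1}$, so $E(|\tilde{B}_t| \mid W_{t-1}) \ne C$ in general. I would therefore augment the Lyapunov function to
\[
\mathcal{V}(W) = v^T|\tilde{Y}| + \kappa\bigl(1+\|r-\bar r\|^2+(\tilde{D}-\bar{D}_0)^2\bigr).
\]
The AR(1) parts satisfy the standard drift with factors $\max(\phi_j^2,\psi^2)<1$.

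For the $\tilde{Y}$ part I would use the block structure directly. The matrix $\gamma^{-1}S'$ is nilpotent, and $|\tilde{B}_t|$ depends on $r_t$ only through the rank-one term $\gamma^{-1}|R'_t|fe^T$. Iterating $M$ steps, the $M$-step product is governed by the $M$ rates $r_{t},\dots,r_{t-M+1}$. By the mixing of the AR(1) processes, the conditional expectation of these rates given $W_{t-M}$ is $r^*$ plus a term that decays geometrically in $\|r_{t-M}-\bar r\|$. I will then verify the drift inequality for the $M$-skeleton chain $W_{t+kM}$, choosing $\kappa$ large enough that the excess coming from the conditional-mean deviation is absorbed. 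The spectral bound $\rho(C)<1$ follows from $\Phi(\gamma,r^*,f)<1$ by the Perron--Frobenius argument of Proposition~\ref{prop:sustain}, applied now to the $2M$-dimensional nonnegative matrix $C$. It yields $v$ and a contraction factor $\rho(C)^M$ on the skeleton.

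\textbf{Minorization.} For $\tilde{Y}$ the noise enters only through $\tilde{d}_t=\tilde{D}_tR'_tf$, whose image lies in a low-dimensional set. So the one-step kernel is degenerate in $\tilde{Y}$, and the remark preceding the statement must be sharpened to an $M$-step or $2M$-step statement. I would show that the map from $(\epsilon_{t+1..t+2M},\eta_{t+1..t+2M})$ to $\tilde{Y}_{t+2M}$ is a polynomial submersion at some point. Given that, the $2M$-step kernel has a density bounded below on a ball, uniformly over initial states in a compact sublevel set. The argument is that the full-support Gaussian noise, pushed through the controllable pair (shift plus injection along $R'_tf$), reaches an open set. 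This amounts to a controllability or Hörmander-type rank check on the pair $(S',R'_tf)$.

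I expect this rank check to be the main obstacle. The simplest route is to restrict to $f$ with $f_M>0$, or more generally to verify the Krylov rank condition, since directions with $f_j=0$ are never excited. For general $f$ I would restrict the state space to the reachable affine subspace, which is invariant, and prove the minorization there. Aperiodicity follows because the minorizing set is reached with positive probability at every step from within itself. Harris' theorem applied to the skeleton then gives a unique invariant measure for the skeleton. Uniqueness transfers to the original chain, because any invariant measure of $W_t$ is also invariant for the skeleton. Convergence of the full chain follows by interpolating between skeleton times, using the one-step moment bound $E\,\mathcal{V}(W_1)\le c\,\mathcal{V}(W_0)$. Finally, the equivalence with $WAC(E|r|,f,\gamma)<g$ is exactly \eqref{rvg} with $r^*$ in place of $r$.
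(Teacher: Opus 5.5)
Your diagnosis of the paper is right on two counts. Its one-step drift conditions only on $\tilde Y$ and treats $\tilde B_{t+1}$ as independent of the current state, even though $r_{t+1}$ is correlated with $r_t$. Its one-step kernel is also degenerate in $\tilde Y$. Your controllability plan for minorization on the augmented chain $W_t$ is sound in outline.

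\textbf{The gap is in your drift step.} Given $W_t$, the $\tilde Y$-part of the conditional drift is $v^T E(|\tilde B_{t+1}| \mid r_t)\,|\tilde Y_t|$. The shift of the conditional rate law, of size about $\phi\|r_t-\bar r\|$, therefore enters \emph{multiplied} by $|\tilde Y_t|$. A bilinear excess $c\,\|r-\bar r\|\,v^T|\tilde Y|$ cannot be absorbed by adding $\kappa(1+\|r-\bar r\|^2+\dots)$. Fix $\|r-\bar r\|$ large enough that the coefficient of $v^T|\tilde Y|$ exceeds one, then let $|\tilde Y|\to\infty$; no choice of $\kappa$ saves the inequality. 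Note also that the deviation is $\phi^k(r_{t-M}-\bar r)$: it is linear in the deviation, and it is damped only by $\phi^k$ with $k\le M$, which is not small for persistent rates.

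Passing to the $M$-skeleton does not cure this, and it adds a second problem. In general $E(|\tilde B_{t+M}|\cdots|\tilde B_{t+1}|)\neq C^M$, because the rank-one injections at different dates multiply rates at different dates, and those rates are positively correlated. The $v$-norm contraction factor is therefore not $\rho(C)^M$. These covariance terms accumulate, rather than vanish, as the skeleton lengthens.

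\textbf{No drift of this form can work under the stated hypothesis.} A geometric drift with $\mathcal V\ge v^T|\tilde Y|$ would force $E_\pi|\tilde Y|<\infty$, which can fail. Take a single tenor ($M=1$, $f=1$). Then $\tilde N_t=\gamma^{-1}(1+r_{t-1})\tilde N_{t-1}+\tilde D_t$, and the stationary mean involves $\sum_k\gamma^{-k}E\prod_{i=1}^k(1+r_{t-i})$. For a Gaussian AR(1), that expectation grows to leading order like $\exp\{k(\bar r+\tfrac12\sigma^2/(1-\phi)^2)\}$. With the paper's $\sigma=0.1\bar r$ and $\phi=0.98$, the correction is $12.5\,\bar r^2$. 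So the mean is infinite for, e.g., $g=0.05$ and $\bar r=0.045$, even though $E|r|<g$. The invariant law still exists in this example, because the Lyapunov exponent $E\log|\gamma^{-1}(1+r)|\le\log(\gamma^{-1}(1+E|r|))<0$ by Jensen.

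\textbf{What a correct proof needs.} One route is a fractional, multiplicative Lyapunov function such as $(v^T|\tilde Y|)^s h(r)$ with small $s>0$. Another is pathwise contraction from a negative top Lyapunov exponent of $\prod_t\tilde B_t$ (as in \citep{brandt1986}, \citep{bougerol1993products}). Either way, you must still prove that $\Phi(\gamma,E|r|,f)<1$ makes that exponent negative for the serially dependent product. That needs a separate argument when $M>1$; for $M=1$ it is the Jensen bound above. The alternative is to add, as an explicit hypothesis, what the paper's one-step computation tacitly assumes: that the rates entering $\tilde B_{t+1}$ are independent of the current state.
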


In particular, for any bounded measurable $g$,
\[
\frac{1}{T} \sum_{t=1}^T g({\tilde{Y}}_t) \to E_\pi[g] \quad \text{a.s.}
\]
(Below, where written, $E[\cdot]$ will always denote $E_{\pi}[\cdot]$ for this invariant measure $\pi$.)

\begin{remark}
While here we have developed and illustrated the stochastic structure stipulating 
simple AR(1) Gaussian processes for rates and deficits, the analysis can be extended and equally applied
to more realistic macro environments in which these variables are, for 
example, linear functions of underlying Gaussian factors (e.g. output gap, inflation) governed by a VAR. In such settings, the normalized debt dynamics 
in the future-cashflow representation
remain a linear stochastic recurrence equation driven by a stationary exogenous process, and the ergodicity and the invariant-mean results below continue 
to hold with the same formulas, with $E(r_t)$, $E(\tilde{D}_t)$ and 
$Cov(r_t,\tilde{D}_t)$ interpreted as the corresponding moments implied by the macro-factor model.
\end{remark}

What we have observed here is that in a simple but representative class of stochastic debt dynamics, under a sustainability condition that implies deficit-driven growth,
there is a unique invariant measure to which the portfolio converges ergodically (in distribution). This in particular means that pathwise time-averages of 
any well-behaved (measurable) portfolio metrics in such a system recover the 
corresponding functionals of the steady-state distribution.

\subsection{Invariant Mean State}

Return to the normalized, linear SRE:
\[
\tilde{{Y}}_t = \tilde{B}_t \tilde{{Y}}_{t-1} + \tilde{d}_t
\]
Assuming $\Phi<1$ such that we have ergodic convergence, because $\tilde{B}_t$ is independent of $\tilde{Y}_t$, the invariant mean satisfies
\[
E(\tilde{Y}) = (I - E(\tilde{B}))^{-1} E(\tilde{d})
\]
Here and below we may (due to invariance of $\tilde{Y_t}$ and stationarity of $\tilde{D}_t$, $r_t$) suppress dependence on $t$.

This system for $E(\tilde{Y})$ is similar to the deterministic 
baseline system of section~\ref{sec:det}. By its form, $E(\tilde{B})$ is the companion matrix for the deterministic system, with rates replaced by their means $\bar{r}_j$. However, 
$E(\tilde{d})$ includes interaction between rates and deficits. Recall $\tilde{d} = \tilde{D} R' f$ where $R'$ depends on rates, having the form
\[
R' = 
 \begin{pmatrix}
  I \\
   R
   \end{pmatrix}
:=
 \begin{pmatrix}
  I \\
   U \cdot diag({r})
   \end{pmatrix}
\]
Thus 
\[
E(\tilde{d}) =
\begin{pmatrix}
\bar{D_0} I \\
U \cdot diag( \bar{D}_0 \bar{r} + \Sigma )
\end{pmatrix}
f
= \bar{D}_0 \bar{R}'_{\Sigma} f,
\]
recalling that $\Sigma_j =  \rho \varsigma \sigma_j$ is the covariance between $\tilde{D}$ and $r_j$, and writing
\[
\bar{R}'_{\Sigma} =
\begin{pmatrix}
I \\
\bar{R}_{\Sigma}
\end{pmatrix}
:= \begin{pmatrix}
I \\
U diag( \bar{r} + \Sigma/\bar{D}_0 )
\end{pmatrix}
\]

The inverse operator $(I-E(\tilde{B}))^{-1}$ can be represented directly due to 
the form of $\tilde{B}$. Recall $\tilde{B} = \gamma^{-1} B$ where $B$ has the form of a rank-one update to a (doubled) shift operator,
\[
B = S' + (R' f) e^T
\]
The sustainability condition $\Phi(\gamma,\bar{r},f)<1$ allows use of the Sherman-Morrison formula, in this context leading to
the explicit formula for the inverse,
\[
(I - E(\tilde{B}))^{-1} = T' \left(I + \frac{1}{1-\Phi} 
\bar{R}'
f  \gamma^{-1} h_2^T
  \right)
\]
where
\begin{align*}
\bar{R}' &= (I, E(R))^T = (I,U diag(\bar{r}))^T, \\
U  &\text{ is the upper-triangular matrix of 1s}, \\
h_2^T &:= (h^T,h^T), \\
h &:=(1,\gamma^{-1}, \dots, \gamma^{-(M-1)})^T, \text{and} \\
T' &:=(1-\gamma^{-1}S')^{-1}.
\end{align*}
$T'$ in turn has the block-Toeplitz form
\[
T' = \begin{pmatrix}
T & 0 \\
0 & T
\end{pmatrix},
\quad  \quad
T = 
\begin{pmatrix}
1 & \gamma^{-1} & \gamma^{-2} & \dots \\
0 & 1 & \gamma^{-1} & \dots \\
0 & 0 & 1 & \dots \\
\vdots & & & 
\end{pmatrix}
\]
(The latter identity can be established by writing $T = I + \gamma^{-1}S + \dots + \gamma^{-(M-1)}S^{M-1}$ and noting $(I-\gamma^{-1}S)T$ is a telescoping sum, with the final term vanishing because $S^M=0$.)
Thus also $h = T^T e_1$.

The expected normalized state $E(\tilde{Y})$ is then
\[
E(\tilde{Y}) = (I - E(\tilde{B}))^{-1} E(\tilde{d})
= T' \left(I + \frac{1}{1-\Phi} 
\bar{R}'
f \gamma^{-1} h_2^T 
  \right) \cdot
  \bar{D}_0 \bar{R}'_{\Sigma} f
\]

Note that $\gamma^{-1} h_2^T \bar{R}'_{\Sigma} f = \Phi( \bar{r}_{\Sigma})$, where $\bar{r}_{\Sigma} := \bar{r} + \Sigma/\bar{D}_0$, and factor it out of the second term. We find
\begin{align} 
E(\tilde{Y}) &= \bar{D}_0 T' \bar{R}'_{\Sigma} f + \bar{D}_0 \frac{ \Phi(\bar{r}_{\Sigma} )}{1-\Phi(\bar{r})} T' \bar{R}' f   \nonumber \\
     &= \frac{\bar{D}_0}{1-\Phi(\bar{r})} T'
        \left( (1-\Phi(\bar{r})) \bar{R}'_{\Sigma} + \Phi(\bar{r}_{\Sigma} ) \bar{R}' \right) f. \label{eq:ey}
\end{align}
This demonstrates the expected debt-state as a linear transformation of the issuance vector $f$. This transform is a linear
combination of the maturity/interest rate cumulative operators $\bar{R}', \bar{R}'_{\Sigma}$, multiplied by the
discounting matrix $T'$ (which discounts periodwise by $\gamma^{-1}$). It also scales up with the normalized 
expected-deficit level $\bar{D}_0$ and varies inversely with how close the average feedback $\Phi(\bar{r})$ is to $1$.

Note in particular that if $\Sigma \to 0$ (recovering the deterministic case), so that
$\bar{r}_{\Sigma} \to \bar{r}$ 
and $\bar{R}_{\Sigma} \to \bar{R}$, this expression collapses to $E(\tilde{Y}) = \frac{\bar{D}_0}{1- \Phi(\bar{r})} 
T' \bar{R}' f$, which is easily seen to be equivalent to the deterministic steady-state of 
section~\ref{sec:det} in future-cashflow representation, using $\bar{r}$ as the steady interest rate assumption. Expression~\ref{eq:ey} therefore
has the following intuitive interpretation: 

\begin{proposition}[Stochastic Invariant Mean]
The invariant mean state of SRE \eqref{normalizedSRE}, if the sustainability condition is satisfied,  is 
given by \eqref{eq:ey} and differs from
the deterministic baseline of section \ref{sec:det} (using average deficits/rates) by a correction term that depends
linearly
on the correlation between interest rates and deficits.
\end{proposition}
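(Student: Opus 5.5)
The plan is to start from the normalized linear SRE \eqref{normalizedSRE} and take expectations under the invariant measure $\pi$ supplied by Proposition~\ref{prop:stoch_sustain}.

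\textbf{Step 1: the mean is finite.} The Foster--Lyapunov drift bound gives $E_\pi V(\tilde Y)\le b'/(1-\rho(C))<\infty$. Since $v$ can be taken strictly positive ($C$ is irreducible), this yields $E_\pi|\tilde Y|<\infty$, so the invariant mean exists.

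\textbf{Step 2: independence and the fixed-point equation.} The key structural fact is that $\tilde B_t$ and $\tilde d_t$ depend only on the time-$t$ values $(r_t,\tilde D_t)$, while $\tilde Y_{t-1}$ is measurable with respect to the past. Here is where care is needed. Rates and deficits are AR(1), so $r_t$ is \emph{not} independent of $\tilde Y_{t-1}$; the product $E(\tilde B_t\tilde Y_{t-1})$ therefore does not factor automatically. I would handle this in two stages. First, establish the formula in the case where the innovations are the only source of randomness entering $\tilde B_t$, or equivalently, where $\tilde B_t$ is conditionally mean-independent of the past. Then argue that the paper's stated convention, $E(\tilde B_t\tilde Y_{t-1})=E(\tilde B)E(\tilde Y)$, is exactly the hypothesis ``$\tilde B_t$ independent of $\tilde Y$'' asserted before \eqref{eq:ey}. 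I expect making this factorization rigorous to be the main obstacle. If it fails in general, the honest statement is that \eqref{eq:ey} holds under that independence assumption, with extra covariance terms $\operatorname{Cov}(\tilde B_t,\tilde Y_{t-1})$ otherwise. Granting the factorization, stationarity gives $E\tilde Y=E(\tilde B)E\tilde Y+E\tilde d$. Invertibility of $I-E(\tilde B)$ follows from $\rho(E\tilde B)\le\rho(C)<1$, and hence $E\tilde Y=(I-E\tilde B)^{-1}E\tilde d$.

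\textbf{Step 3: closed form.} I would verify the Sherman--Morrison expression. Write $I-E\tilde B=(I-\gamma^{-1}S')-\gamma^{-1}\bar R'fe^T$ and apply the rank-one inverse formula with $T'=(I-\gamma^{-1}S')^{-1}$, using $e^TT'=h_2^T$. The denominator $1-\gamma^{-1}h_2^T\bar R'f$ is checked to equal $1-\Phi(\bar r)$ by expanding $h^TU\,\mathrm{diag}(\bar r)f=\sum_j \bar r_jf_j\sum_{s\le j}\gamma^{-(s-1)}$ and comparing with \eqref{eq:feedback}. Next compute $E\tilde d$ as $E(\tilde D R'f)=\bar D_0E(R')f+U\,\mathrm{diag}(\operatorname{Cov}(\tilde D,r))f$. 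This requires the stationary covariance between the AR(1) processes, which I would identify with $\Sigma$ as in the paper; this is strictly correct only when the same-period innovation covariance is the relevant moment, and otherwise it should be replaced by $\rho\varsigma\sigma_j/(1-\psi\phi_j)$. The same identity gives $\gamma^{-1}h_2^T\bar R'_\Sigma f=\Phi(\bar r_\Sigma)$. Multiplying out then yields \eqref{eq:ey}.

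\textbf{Step 4: the correction term.} Finally, isolate the correction. Write $\bar R'_\Sigma=\bar R'+\binom{0}{U\mathrm{diag}(\Sigma)/\bar D_0}$ and note that $\Phi(\bar r_\Sigma)-\Phi(\bar r)$ is linear in $\Sigma$, since $\Phi$ is affine in $r$. Then \eqref{eq:ey} equals the deterministic steady state $\frac{\bar D_0}{1-\Phi(\bar r)}T'\bar R'f$ plus a term linear in $\Sigma$, and hence linear in $\rho$. This term vanishes at $\Sigma=0$, which completes the proposition.
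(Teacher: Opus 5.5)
Your computation follows the paper's route: take $E_\pi$ in \eqref{normalizedSRE}, factor $E(\tilde B_t\tilde Y_{t-1})=E(\tilde B)E(\tilde Y)$, invert by Sherman--Morrison using $e^TT'=h_2^T$ and $\gamma^{-1}h_2^T\bar R'f=\Phi(\bar r)$, insert $E(\tilde d)$, and factor out $\Phi(\bar r_\Sigma)$. That algebra is right. The two points you flag, however, are exactly where the paper's own argument is unsupported. The paper justifies the factorization only by saying $\tilde B_t$ ``is independent of'' the state, when it is merely not a \emph{function} of the state. It also treats the innovation covariance $\rho\varsigma\sigma_j$ as $\Cov(\tilde D_t,r_{t,j})$, which for stationary AR(1)s is $\rho\varsigma\sigma_j/(1-\psi\phi_j)$, as you say.

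So your ``stage two'', which appeals to the paper's assertion, closes nothing, and your open question has a negative answer. Since $\tilde N_t=\gamma^{-1}e^T\tilde Y_{t-1}+\tilde D_t$, the SRE is exactly $\tilde Y_t=\gamma^{-1}S'\tilde Y_{t-1}+\tilde N_tR'_tf$. Taking $E_\pi$ using only stationarity gives
\[
E(\tilde Y)=\bigl(I-E(\tilde B)\bigr)^{-1}\bar D_0\begin{pmatrix} I\\ U\,\mathrm{diag}(\bar r+\Sigma^N/\bar D_0)\end{pmatrix}f,\qquad \Sigma^N_j:=\Cov_\pi(r_{t,j},\tilde N_t).
\]
This is \eqref{eq:ey} with $\Sigma$ replaced by the rate--\emph{issuance} covariance $\Sigma^N_j=\Cov(r_{t,j},\tilde D_t)+\gamma^{-1}\Cov_\pi(r_{t,j},e^T\tilde Y_{t-1})$. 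The second term is your missing $\Cov(\tilde B_t,\tilde Y_{t-1})$ contribution, and it does not vanish when $\phi_j>0$. Here $e^T\tilde Y_{t-1}$ contains the next coupons $\tilde N_{t-1}\sum_k r_{t-1,k}f_k$ on debt issued at $t-1$, while $E(r_{t,j}\mid\mathcal F_{t-1})=\bar r_j+\phi_j(r_{t-1,j}-\bar r_j)$, with $\mathcal F_{t-1}$ the history to $t-1$. For $M=1$, $f_1=1$, $\rho=0$ one finds $\Sigma^N=\gamma^{-1}\phi\,E_\pi[(r-\bar r)^2\tilde N]/(1-\gamma^{-1}\phi(1+\bar r))$, which is generically nonzero. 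So under the paper's hypothesis $0<\phi_j<1$, the invariant mean differs from the baseline even at $\rho=0$. Neither \eqref{eq:ey} nor the ``correction linear in the correlation'' conclusion holds as stated.

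What your stage one proves is the correct version. If rates are serially independent ($\phi_j=0$), then $\Cov_\pi(r_{t,j},e^T\tilde Y_{t-1})=0$. At the same time $\Cov(r_{t,j},\tilde D_t)=\rho\varsigma\sigma_j$ exactly, because $1-\psi\phi_j=1$. Both of your caveats therefore disappear together, and \eqref{eq:ey} holds verbatim with a correction linear in $\rho$. For AR(1) rates, \eqref{eq:ey} with either choice of $\Sigma$ is only an approximation, because it drops $\gamma^{-1}\Cov_\pi(r_{t,j},e^T\tilde Y_{t-1})$.

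Your Step 1 rests on the same independence. The paper's drift inequality replaces $E(|\tilde B_{t+1}|\mid\mathcal F_t)$ by the constant $C$, which requires $r_{t+1}$ to be independent of the past. In the serially independent case this is fine, provided you do two things. First, bound $E\,v^T|\tilde Y_t|$ unconditionally along a path from a fixed start, since the conditional constant $b'$ otherwise depends on $\tilde D_t$. Second, pass to $\pi$ by Fatou.
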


The invariant mean state $E(\tilde{Y})$ can be helpful in characterizing various steady portfolio metrics, analogous to those developed
in section \ref{sec:det}.

\subsection{Invariant Maturity Distribution}

The invariant maturity distribution of SRE \eqref{normalizedSRE} is obtained from the principal portion of $E(\tilde{Y})$, which are just its first $M$ elements:
\begin{align*}
E(\tilde{q}) & := ( I, 0) E(\tilde{Y}) =  \frac{\bar{D}_0}{1-\Phi(\bar{r})} ( I, 0)  T'
        \left( (1-\Phi(\bar{r})) \bar{R}'_{\Sigma} + \Phi(\bar{r}_{\Sigma} ) \bar{R}' \right) f \\
 & = \frac{\bar{D}_0 ( 1 -\Phi(\bar{r}) + \Phi(\bar{r}_{\Sigma} ))  }{1-\Phi(\bar{r})} Tf \\ 
& =\frac{\bar{D}_0}{1-\Phi(\bar{r})} \left( 1 + h^T U diag(\Sigma/\bar{D}_0) f \right) T f.
\end{align*}
where we have used $(I,0)T' = (T,0)$ and $(T,0)\bar{R}' = (T,0)\bar{R}'_{\Sigma} = I$.

Note that $(Tf)_j = \sum_{k=j}^M \gamma^{-(k-j)} f_k$. 
Comparing with \eqref{eq:qlevels}, we see that the above equation has the same form as the steady normalized debt-levels $q_j$ obtained
in the deterministic case, up to a constant (an adjustment that scales with
rate/deficit correlation $\Sigma$). Since
portfolio shares $\theta = \tilde{q} / \mathbf{1}^T \tilde{q}$ are normalized debt levels scaled to add to $1$, the constant cancels out and plays
no role in $\theta = \tilde{q} / \mathbf{1}^T \tilde{q} = Tf / \mathbf{1}^T T f$. Thus moving to a stochastic regime with correlated, mean-reverting
rates and deficits has left steady (invariant) portfolio shares unaffected versus the deterministic baseline model.

\begin{proposition}[Invariant Portfolio Shares]
The invariant portfolio shares $\theta$ for the stochastic SRE \eqref{normalizedSRE}, provided the sustainability condition $\Phi(\gamma,r^*,f)<1$ is satisfied, 
are given by the formula for deterministic steady shares in
\eqref{eq:shares_det}, which is equivalent to $\theta = Tf / \mathbf{1}^T T f$, and are otherwise independent of rates, deficits, or their correlation.
\end{proposition}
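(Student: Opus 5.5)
We prove the proposition by taking the principal block of the invariant mean state \eqref{eq:ey} and checking that it is a positive scalar multiple of $Tf$. The scalar then cancels when we normalize to shares, and what remains is the deterministic formula \eqref{eq:shares_det}. The "invariant portfolio shares" are read as the ratios $E_\pi(\tilde q)/\mathbf{1}^T E_\pi(\tilde q)$, in line with how the paper treats the other invariant metrics.

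\textbf{Step 1 (the mean exists).} Under $\Phi(\gamma,r^*,f)<1$, Proposition~\ref{prop:stoch_sustain} gives a unique invariant measure $\pi$. The Lyapunov drift bound $E V(\tilde Y_{t+1})\le \rho(C)V+b'$ with $\rho(C)<1$ gives $E_\pi[v^T|\tilde Y|]\le b'/(1-\rho(C))<\infty$. For this to yield integrability of every coordinate we need $v$ strictly positive. That follows from the irreducibility of $C$, which is the Leslie-type structure already noted in the proof of Proposition~\ref{prop:sustain}.

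\textbf{Step 2 (the mean solves the fixed-point equation).} Next we justify $E(\tilde Y)=(I-E\tilde B)^{-1}E(\tilde d)$. Take $\pi$-expectations in \eqref{normalizedSRE}.

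Here lies the main obstacle. $\tilde B_t$ depends on $r_t$, and $\tilde Y_{t-1}$ is correlated with $r_{t-1}$, hence with $r_t$ through the AR(1) persistence. So the factorization $E(\tilde B_t\tilde Y_{t-1})=E(\tilde B_t)E(\tilde Y_{t-1})$ needs care. The key observation is that the principal block does not care. In $\tilde B_t=\gamma^{-1}(S'+R'_t f e^T)$, the first $M$ rows of $R'_t f$ are just $f$, with no rate dependence. Consequently the principal recursion $\tilde P_t=\gamma^{-1}(S\tilde P_{t-1}+ f\, e^T\tilde Y_{t-1})+\tilde D_t f$ has deterministic coefficients, and its expectation is exact. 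It gives $(I-\gamma^{-1}S)E\tilde P=f\,(\gamma^{-1}E[e^T\tilde Y]+\bar D_0)$. Writing $\kappa:=\gamma^{-1}E[e^T\tilde Y]+\bar D_0$, which is the expected normalized issuance $E\tilde N$, we get $E(\tilde q)=E\tilde P=\kappa\, Tf$, using $T=(I-\gamma^{-1}S)^{-1}$. Any dependence between rates and deficits, or between rates and the state, enters only through the scalar $\kappa$.

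\textbf{Step 3 (cancel the scalar).} We have $\kappa>0$: the finite-$M$ structure forces $E\tilde N>0$ once $\bar D_0>0$ under the sustainability condition, and consistency with \eqref{eq:ey} gives $\kappa=\bar D_0(1-\Phi(\bar r)+\Phi(\bar r_\Sigma))/(1-\Phi(\bar r))$. Then $\theta=E\tilde q/\mathbf{1}^TE\tilde q = Tf/\mathbf{1}^TTf$. Expanding $(Tf)_j=\sum_{k\ge j}\gamma^{j-k}f_k$ and $\mathbf{1}^TTf=\sum_k f_k(1-\gamma^{-k})/(1-\gamma^{-1})$ recovers \eqref{eq:shares_det} exactly. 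This expression involves only $f$ and $\gamma$, so independence from $\bar r$, $\bar D_0$, $\sigma_j$, $\varsigma$, $\rho$ is immediate.

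As a consistency check, the direct route through \eqref{eq:ey} reaches the same conclusion: it uses $(I,0)T'=(T,0)$ and the fact that the top block of both $\bar R'$ and $\bar R'_\Sigma$ is $I$. That route, however, inherits the factorization assumption criticized in Step 2, so the principal-block argument is the one we rely on.
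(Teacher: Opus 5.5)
Your argument is correct, and it reaches the result by a genuinely different route from the paper's. The paper reads the shares off the full invariant mean \eqref{eq:ey}. It solves $E(\tilde Y)=(I-E(\tilde B))^{-1}E(\tilde d)$ by Sherman--Morrison, then uses $(I,0)T'=(T,0)$ and the fact that $\bar R'$ and $\bar R'_\Sigma$ both have top block $I$. This gives $E(\tilde q)=\kappa\,Tf$ with the explicit scalar $\kappa=\bar D_0(1-\Phi(\bar r)+\Phi(\bar r_\Sigma))/(1-\Phi(\bar r))$, which cancels in the ratio. You bypass the full-state fixed point. The principal sub-recursion $\tilde P_t=\gamma^{-1}S\tilde P_{t-1}+\tilde N_t f$ has deterministic coefficients, so all randomness passes through the scalar $\tilde N_t$, and stationarity alone gives $E(\tilde P)=E(\tilde N)\,Tf$.

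What your route buys is robustness. The paper's fixed-point equation requires $E(\tilde B_t\tilde Y_{t-1})=E(\tilde B)E(\tilde Y)$, which is not justified for the persistent rates ($\phi_j>0$) actually assumed. Your identity holds for any stationary joint law of rates, deficits and state, so the ``independent of rates, deficits, or their correlation'' clause no longer rests on that step. What the paper's route buys is the value of the scalar, which it needs downstream for $E(\tilde Q)$ and $E(\tilde I)$. Your argument does not deliver that value, and your appeal to ``consistency with \eqref{eq:ey}'' to identify $\kappa$ re-imports the factorization you reject, so drop it.

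Two further small points. First, the cancellation only needs $\kappa\neq 0$, not $\kappa>0$, and your positivity claim is asserted rather than shown. Applying your own idea to the first coupon entry gives the exact relation $(1-\Phi(\bar r))\,\kappa=\bar D_0+\gamma^{-1}\sum_j f_j\frac{1-\gamma^{-j}}{1-\gamma^{-1}}\Cov(r_{t,j},\tilde N_t)$. This pins down the nondegeneracy condition, and it shows that the paper's scalar corresponds to setting $\Cov(r_{t,j},\tilde N_t)=\Sigma_j$. Second, Step~1 does not escape the objection you raise in Step~2. The drift inequality conditions only on $\tilde Y$, so it implicitly treats $\tilde B_{t+1}$ as independent of the current state. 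Integrability of $\tilde Y$ under $\pi$ is therefore an input taken from the paper's ergodicity analysis (Proposition~\ref{prop:stoch_sustain}), just as in the paper's own proof.
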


\subsection{Invariant Rollover Fraction}

It also follows from the preceding that the invariant rollover fraction $\theta_1$ is identical to that in the deterministic case, because 
the correlation-correction cancels; we write $\theta_1=\theta_{1,base}$
where $\theta_{1,base}$ is given in equation \eqref{eq:rr}:
\[
\theta_1 = \theta_{1,base} = \sum_{j=1}^M w_j \tau_j, \quad  \tau_j = (\gamma-1)/(\gamma^j-1), \quad w_j = f_j (1-\gamma^{-j}) / \sum_k f_k (1-\gamma^{-k}).
\]

\subsection{Invariant Debt Level}

The total (normalized) debt outstanding, in the invariant steady state, is the summed entries of $E(q)$;
\begin{equation}\label{eq:EQstochastic}
E(\tilde{Q}) = \mathbf{1}^T E(\tilde{q}) 
= \frac{\bar{D}_0}{1-\Phi(\bar{r})} \left(  1 +h^T U diag(\Sigma/\bar{D}_0)f \right) \mathbf{1}^T T f.
\end{equation}

Comparing with \eqref{eq:edebtlevel}   and \eqref{eq:shares_det}, and 
noting
 $(\mathbf{1}^T T)_k = \sum_{j=0}^{k-1} \gamma^{-j} = \frac{1-\gamma^{-k}}{1-\gamma^{-1}}$, we see that the steady debt level for the SRE
formulation again differs from that for the deterministic baseline $\tilde{Q}_{base}:=\frac{\bar{D}_0}{1-\Phi(\bar{r})}\mathbf{1}^T T f$ only by the correlation-linked scale
factor $ 1 + h^T U diag(\Sigma/\bar{D}_0)f$. 

\begin{align*}
E(\tilde{Q}) &= ( 1 +h^T U diag(\Sigma/\bar{D}_0) f ) \tilde{Q}_{base}  \\
 &= \left( 1 + \frac{1}{\bar{D}_0} \sum_{j=1}^M \left( \frac{ 1- \gamma^{-j} }{1-\gamma^{-1}} \right) f_j \Sigma_j \right) \tilde{Q}_{base}.
\end{align*}

When $\Sigma = 0$ this factor is just $1$ and the deterministic baseline is recovered. Recalling 
that $\Sigma_j = \rho \varsigma \sigma_j$, the effect of correlation $\rho$ is
straightforward: if $\rho<0$ ($\rho>0$) then the normalized steady debt level $E(\tilde{Q})$ is reduced (increased) from the deterministic baseline.

In particular, for the economically-realistic case $\rho <0$, in which rates tend to be lower in times of higher deficits and vice versa, this model implies 
a lower steady debt-level $E(\tilde{Q}) < \tilde{Q}_{base}$ all else equal, as intuition would suggest.

\subsection{Invariant Interest Cost and Ratio}

The expected next-period (normalized) interest cost is given by the $M+1$st element of $E(\tilde{Y})$, which can be obtained by multiplying by
$( 0^T, e_1^T)$. Noting
\[
( 0^T, e_1^T) T' = ( 0^T, e_1^T) 
\begin{pmatrix}
T & 0 \\ 0 & T \end{pmatrix} = 
(0^T e_1^T T) = (0^T h^T),
\]
we find
\begin{align}
E(\tilde{I}) & =
\frac{\bar{D}_0}{1-\Phi(\bar{r})} (0,h^T) 
\left(
(1-\Phi(\bar{r})) \bar{R}'_{\Sigma} + \Phi(\bar{r}_{\Sigma} ) \bar{R}'
\right) f \\
& = \frac{\bar{D}_0}{1-\Phi(\bar{r})}  h^T U
\left( (1-\Phi(\bar{r})) diag(\bar{r}+\Sigma/\bar{D}_0) + \Phi(\bar{r}_{\Sigma} )  diag(\bar{r})
\right) f \\
& = \frac{\bar{D}_0}{1-\Phi(\bar{r})}  h^T U
\left(
diag(\bar{r}) + (1-\Phi(\bar{r})) diag(\Sigma/\bar{D}_0) + (\Phi(\bar{r}_{\Sigma})-\Phi(\bar{r})) diag(\bar{r})
\right) f \\
& = \frac{\bar{D}_0}{1-\Phi(\bar{r})} h^T U
\left(
diag(\bar{r}) + (1-\Phi(\bar{r})) diag(\Sigma/\bar{D}_0) + (\Phi(\Sigma/\bar{D}_0)-\Phi(0))diag(\bar{r})
\right) f.
\label{eq:EIstochastic}
\end{align}

The contribution here involving $h^T U diag(\bar{r}) f$, after normalizing by portfolio size, equates to the 
steady interest cost for deterministic issuance with interest rates $\bar{r}$ (compare
equation \eqref{eq:wac}). The other terms are correlation corrections that depend linearly on $\Sigma$. We can thus write
\begin{align*}
E(\tilde{I}) &= I_{base}(\bar{D}_0, \bar{r}) + \frac{1}{1-\Phi(\bar{r})} h^T U \left(
(1-\Phi(\bar{r})) diag(\Sigma) + (\Phi(\Sigma)-\Phi(0))diag(\bar{r})
\right)  f \\
& = I_{base}(\bar{D}_0, \bar{r}) + \frac{1}{1-\Phi(\bar{r})}  \sum_{j=1}^M   (\frac{1-\gamma^{-j}}{1-\gamma^{-1}}) f_j 
\left( 
(1-\Phi(\bar{r})) \Sigma_j  +
(\Phi(\Sigma) - \Phi(0)) \bar{r}_j 
\right)
\end{align*}

To gauge the interest cost ratio we can divide by the steady (normalized) portfolio size $E(\tilde{Q})$ (see previous section). This gives
\begin{align*}
\frac{E(\tilde{I})}{E(\tilde{Q})}& = \frac{
I_{base}+  \frac{1}{1-\Phi(\bar{r})} h^T U (  (1-\Phi(\bar{r})) diag(\Sigma) + (\Phi(\Sigma)-\Phi(0))diag(\bar{r}) ) f }
{
( 1 + h^T U diag(\Sigma/\bar{D}_0) f  ) \tilde{Q}_{base}
} \\
&= \frac{1}{1 +h^T U diag(\Sigma/\bar{D}_0) f  } \left( \frac{I}{Q} \right)_{base} + \\
& \frac{  h^T U
\left(
(1-\Phi(\bar{r})) diag(\Sigma) + (\Phi(\Sigma)-\Phi(0))diag(\bar{r})
\right) f
}{ ( 1-\Phi(\bar{r})) (1 +h^T U diag(\Sigma/\bar{D}_0) f  ) \tilde{Q}_{base} }
\end{align*}
Unlike in the deterministic case this is a nonlinear expression in issuance policy $f$, and 
its dependence on covariance $\Sigma$ may be complex and depends on the precise structure of $\Sigma$ and $\bar{r}$. 
But it is easy to see that 
as $\Sigma \to 0$ it reduces to the baseline expression $\tilde{I}_{base}/\tilde{Q}_{base} = WAC_{det}$, whose formula given in \eqref{eq:wac}.

The deviation from the baseline is  evidently not a simple linear function of $\Sigma$. 
[TODO: analyze how this depends on $\Sigma$/$rho$ better]

\subsection{Implications for Issuance Policy Optimization}
\label{sec:opt}

The ratio $E(\tilde{I})/E(\tilde{Q})$, which was used in the deterministic case, is arguably not the best indicator of interest cost in this stochastic 
context. In practice (e.g. numerical models) it is probably more 
common to track the ratio $I_t/Q_t$, thus ideally one would calculate $E(\tilde{I}/\tilde{Q}) = E(I/Q)$ to gauge the invariant average interest cost associated
with numerical metrics. A 
closed-form formula for $E(I/Q)$ in this framework
may be unwieldly here; it evidently differs from
$E(\tilde{I})/E(\tilde{Q})$ by a Jensen effect due to convexity of $1/x$, and by the covariance between $I$ and $Q$ (presumably positive), both effects of which
ought to lead to
$E(I/Q)>E(\tilde{I})/E(\tilde{Q})$. The consequence is that the interest cost ratio developed above is likely to be an underestimate of interest-cost ratios 
evaluated empriically and reported in a 
stochastic debt-simulation setting. 

However, for modest values of $\rho$ it may be possible to treat this effect as negligible, or use a first-order expansion in $\rho$, to characterize the invariant interest cost ratio. This distinction also might have little effect
on conclusions drawn about the \textit{relative} cost impact of nearby issuance strategies $f$.

One might also target either $\tilde{I}$ by itself (the steady 
normalized interest cost) or $\tilde{Q}$ as the target metrics to gauge the cost impact of fiscal policy. While in 
the deterministic case there was no mathematical difference between targeting $WAC$ and $\tilde{Q}$ for optimization purposes, in the stochastic
case these can differ.

These considerations suggest one of the following approaches to 
incorporating the stochastic, correlated framework into the earlier optimization problem. For the 
metrics:

Invariant (normalized) interest cost:
\begin{align*} 
E(\tilde{I}) (f) =& 
\tilde{I}_{base}(\bar{D}_0, \bar{r}; f) + \\
 & \frac{1}{1-\Phi(\bar{r})} h^T U \left(
(1-\Phi(\bar{r})) diag(\Sigma) + (\Phi(\Sigma)-\Phi(0))diag(\bar{r})
\right)  f
\end{align*}

Invariant (normalized) debt level:  
\[
E(\tilde{Q})(f) =   ( 1 +h^T U diag(\Sigma/\bar{D}_0)f ) \tilde{Q}_{base} (\bar{D}_0, \bar{r}; f) 
\]

Invariant interest-cost ratio:  
\[
E(\tilde{I}) (f)  /   E(\tilde{Q})(f)
\]

Steady rollover ratio:
\[
 \theta_1 = \theta_1(f) = \theta_{1,base}(f),
\]

and given a rollover-ratio tolerance $R \in (0,1)$, 
\[
\text{minimize } E(\tilde{Q})(f) \text{ subject to } \theta_1(f) \le R, \quad 0 \le f \le 1, \quad \mathbf{1}^T f = 1.
\]
or 
\[
\text{minimize } E(\tilde{I})(f) \text{ subject to } \theta_1(f) \le R, \quad 0 \le f \le 1, \quad \mathbf{1}^T f = 1.
\]
or
\[
\text{minimize } E(\tilde{I})(f)/E(\tilde{Q}(f) \text{ subject to } \theta_1(f) \le R, \quad 0 \le f \le 1, \quad \mathbf{1}^T f = 1.
\]

Above, while $\tilde{I}_{base}$ and $\tilde{Q}_{base}$ are linear in $f$, their stochastic counterparts are not. So (unlike the deterministic
baseline) these objection functions do not in general lead to a linear programming problem.
However, they are nevertheless amenable to fast solutions using sequential quadratic programming or similar methods.

Also note that for realistically-modest choices of $\Sigma$, the relative deviations of these metrics from their deterministic counterparts $\tilde{I}_{base}$ and $\tilde{Q}_{base}$ are small. This suggests that 
scenario-analyses of or optimization solutions for $f$ will be well-approximated by and only small adjustments to the deterministic baseline metrics.

\subsection{Second-Moment And Other Metrics}

While this piece focuses on first‑moment objects - such as the invariant mean of the future‑cashflow state and the resulting expressions for expected debt levels and interest cost - in a 
Gaussian context the 
same linear structure also delivers a well‑defined characterization of second moments. In particular, the centered state $\tilde{Z}_t := \tilde{Y}_t - E( Y)$ satisfies a linear stochastic recurrence whose invariant covariance
$C = E(\tilde{Z}_t \tilde{Z}_t^T)$ solves a standard Lyapunov equation of the form
\[
vec(C) = \left( I - E[ \tilde{B}_t \otimes \tilde{B}_t^T ] \right)^{-1} vec(G_{\xi})
\]
where for $\xi_t:=\tilde{d}_t - E(d) + (\tilde{B}_t - E(B)) E(Y)$, $G_{\xi} = E(\xi_t \xi_t^T)$ is the covariance of the one‑step innovation. All components of this operator equation are 
explicit functions of the means and covariances of the exogenous macro processes, and the system is numerically tractable for realistic maturity dimensions. 

Similarly, tail-risk measures of the invariant distribution such as CDaR/CFaR (conditional debt-/financing-at-risk) could reduce to closed-form formulas in a Gaussian context. 

A fuller
development 
of these second‑moment and tail‑risk implications within this model framework is left for future work.

\section{Numerical Examples}
\label{sec:numerics}

In this section we present illustrative examples of expected invariant metrics and stochastic simulations under 
specific parameter-choices for our normalized-debt SRE.

\subsection{Baseline parameters and metrics}

Table~\ref{table:params1} presents the parameters for a baseline debt-dynamics simulation. For simplicity, we assume a rolling debt-issuance policy
using three tenors (to represent short/medium/long), $j = 1, 3, 10$.

\begin{table}[t]
\centering
\caption{Baseline Debt-Dynamics SRE Parameters}
\label{table:params1}
\begin{tabular}{l l l l}
\toprule
\textbf{Section} & \textbf{Item} & \textbf{Symbol} & \textbf{Value} \\
\midrule
\multirow{1}{*}{\textbf{Key Tenors}}
 & Issuance tenors & $j$   & $1, 3, 10$ \\ 
\addlinespace
\multirow{3}{*}{\textbf{Issuance Allocation}} 
  &                & $f_1$   & 0.4 \\ 
  & New-issuance allocation & $f_3$   & 0.5 \\ 
  &  & $f_{10}$ & 0.1 \\ 
\addlinespace
\multirow{5}{*}{\textbf{Interest Rates}}
  &    &    &  .02 \\ 
  & Mean interest rates  & $\bar{r}$ & .03 \\ 
  &  & & .05  \\ 
\addlinespace
  & Rate volatility  & $\sigma$ & $0.1 \times \bar{r}$ \\ 
\addlinespace
  & Rate persistence  & $\phi$ & $0.98$ \\ 
\addlinespace
\multirow{5}{*}{\textbf{Deficits}} 
  & Asymptotic deficit-growth rate & $g$ & $.08$ \\
  & Deficit-growth factor & $\gamma$ & $1.08$ \\
  \addlinespace
  & Mean normalized deficit & $\bar{D}_0$ & $1$ \\ 
\addlinespace
  & Normalized-deficit volatility  & $\varsigma$ & $0.1 \times \bar{D}_0 = 0.1$ \\ 
\addlinespace
  & Deficit persistence  & $\psi$ & $0.98$ \\ 
\addlinespace
\multirow{1}{*}{\textbf{Correlation}} 
  & Scalar rate-deficit correlation & $\rho$ & $-0.5$ \\
\bottomrule
\end{tabular}
\end{table}

 We can also immediately compute from these
parameters our key invariant metrics, using the formulas~\eqref{eq:feedbackstochastic},~\eqref{eq:EQstochastic} and~\eqref{eq:EIstochastic} 
developed above. The results are presented in Table~\ref{table:invariantmetrics1}. Note we include the calculated 
feedback-function to confirm $\Phi<1$ for ergodic convergence. (In Table~\ref{table:invariantmetrics1} we approximated this
by using $\bar{r}$
rather than $r^* := E(|r|)$ to compute $\Phi$, but the difference is negligible for the rate volatilities $\sigma_j$ assumed.)

\begin{table}[!ht]
\centering
\caption{Selected invariant metrics of baseline SRE.}
\label{table:invariantmetrics1}
\begin{tabular}{l l l l}
\toprule
\textbf{Metric} & \textbf{Symbol} & \textbf{Value} \\
\midrule
Feedback function & $\Phi(\gamma, r^*, {f}) $ & $\approx 0.9061 $   \\
Expected normalized debt-level & $E(Q)$ & $26.7871$ \\
Expected next-period interest cost & $E(I)$ & $1.06360$ \\
Invariant interest cost ratio & $E(I)/E(Q)$ & $0.0397$ \\
Invariant 1-period rollover fraction & $\theta_1 = E(q_1/Q)$ & $0.3492$ \\
\bottomrule
\end{tabular}
\end{table}

Finally, for numerical and Monte Carlo estimation we simulated the SRE-defined debt dynamics (in normalized form, i.e. equation~\eqref{normalizedSRE})
using the parameters shown in Table~\ref{table:MCparams}.

\begin{table}[!ht]
\centering
\caption{Monte Carlo simulation parameters.}
\label{table:MCparams}
\begin{tabular}{l l l}
\toprule
\textbf{Parameter} & \textbf{Value} \\
\midrule
Number of periods (simulation horizon) & $100$ \\
Number of paths (realizations) & $500$ \\
\bottomrule
\end{tabular}
\end{table}

\subsection{Single-path examples}

Figure~\ref{fig:sim_1path} depicts $Q_t$ and $I_t$ (on dual-axis) for a single realization of the normalized SRE 
to a horizon of $t=100$, along with
the known invariant means $E(Q)$ and $E(I)$ as dotted lines. This illustrates fluctuation around
the theoretical invariant distribution intrinsic to the system. The speed of convergence and size of the fluctuations of course
will depend on the stochastic parameters i.e. interest-rate/deficit volatility and persistence.
\noindent
\begin{figure}[H]
\centering
\includegraphics[scale=.5]{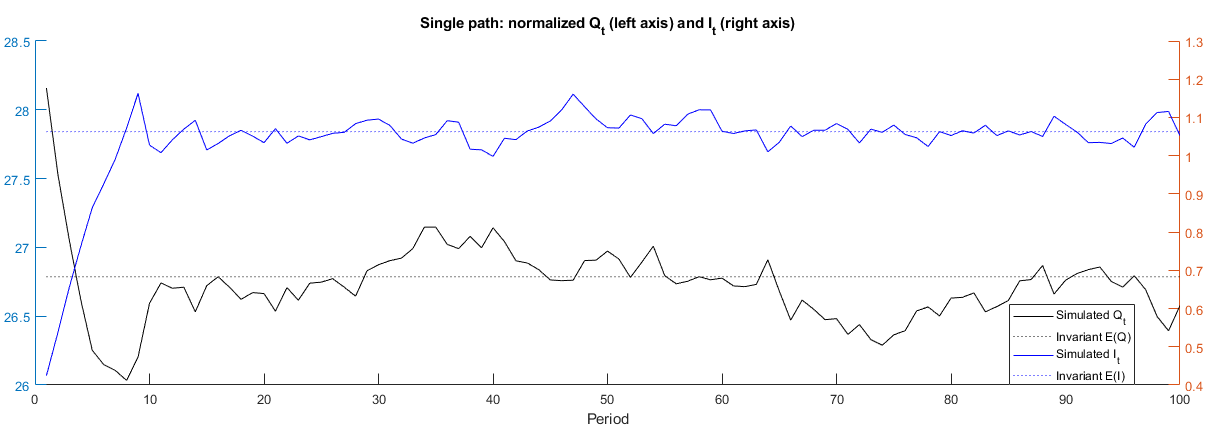}
\caption{\footnotesize Illustrative
single-path realization of $Q_t$ and $I_t$ (left- and right-axis, respectively) to $t=100$ of the baseline 
normalized SRE, along with their means $E(Q)$ and $E(I)$ (dotted lines).}
\label{fig:sim_1path}
\end{figure}

Figure~\ref{fig:sim_1theta1} demonstrates a single realization of the one-period rollover ratio $(Q_1/Q)_t$, showing that it very rapidly
settles near its theoretical long-term value.
\noindent
\begin{figure}[H]
\centering
\includegraphics[scale=.65]{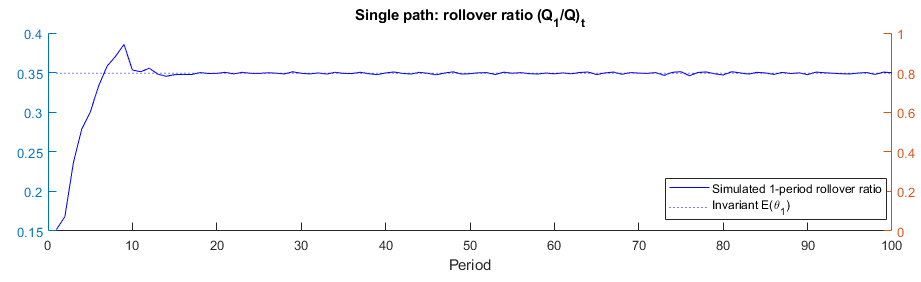}
\caption{\footnotesize Single-path realization of the one-period rollover fraction $(Q_1/Q)_t$ along with its invariant mean $\theta_1$ (dotted line).}
\label{fig:sim_1theta1}
\end{figure}

\subsection{Time averages and ergodicity}

Figure~\ref{fig:timeAvgs} depicts the ergodicity by plotting the time-averages $\sum_1^T Q_t / T$ and $\sum_1^T I_t / T$ against 
$T$ from various randomly-chosen initial conditions. Convergence to the analytical $E(Q)$ and $E(I)$ is apparent.
\noindent
\begin{figure}[H]
\centering
\includegraphics[scale=.6]{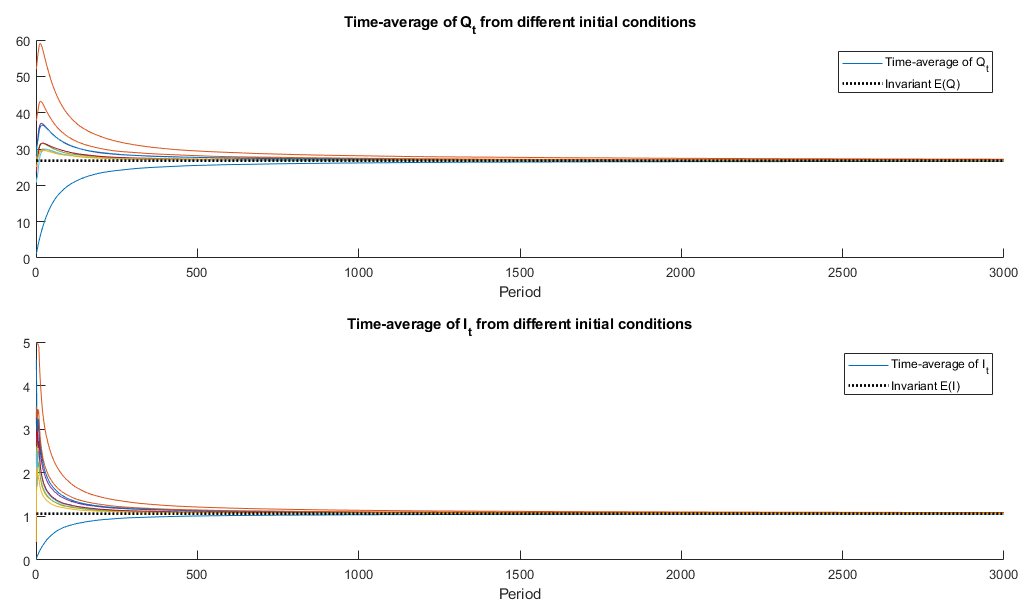}
\caption{\footnotesize Time-averages $\sum Q_t / T$ and $\sum I_t / T$ vs. $T$ 
of $Q_t$ and $I_t$ respectively from various randomly-chosen initial conditions, along with the ergodic
means $E(Q)$ and $E(I)$ (dotted lines).}
\label{fig:timeAvgs}
\end{figure}

\subsection{Distribution of simulated metrics}

Figure~\ref{fig:sim_fanQ}, ~\ref{fig:sim_fanI} and~\ref{fig:sim_fanTheta} depict 
fan-charts of the empirical distributions of $Q_t$, $I_t$ and the one-period rollover ratio $(Q_1/Q)_t$ (respectively)
based on 15th and 85th percentiles, along with their medians, across the $500$ realizations. The theoretical 
expectations $E(Q)$, $E(I)$ and $\theta_1 = E(Q_1/Q)$ are again shown as dotted-lines. (In the latter case the distribution of 
the rollover ratio apparently rapidly becomes
very tightly bound to $\theta_1$.)
\noindent
\begin{figure}[H]
\centering
\includegraphics[scale=.5]{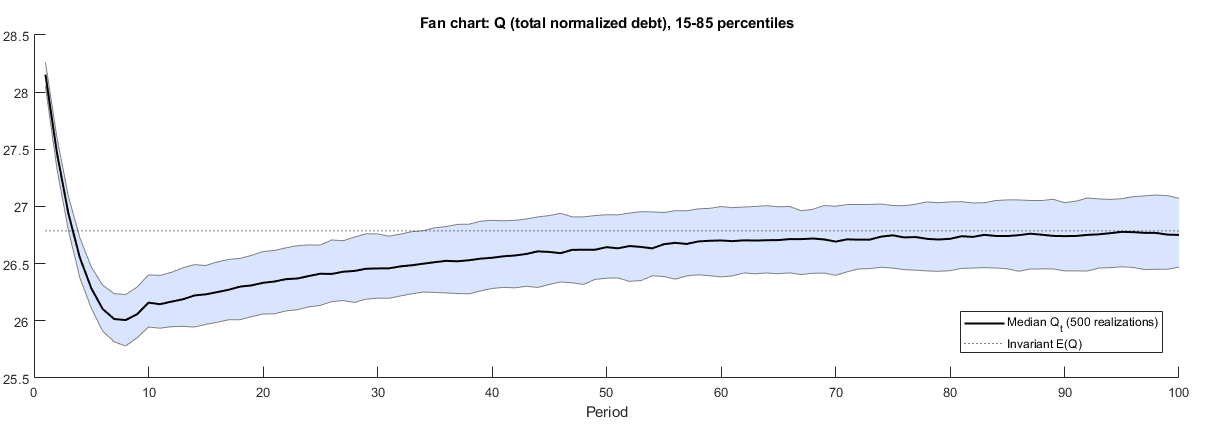}
\caption{\footnotesize Fan-chart of 15th to 85th percentile of realized $Q_t$ paths, along with median (solid line) and theoretical $E(Q)$ (dotted line).}
\label{fig:sim_fanQ}
\end{figure}

\noindent
\begin{figure}[H]
\centering
\includegraphics[scale=.5]{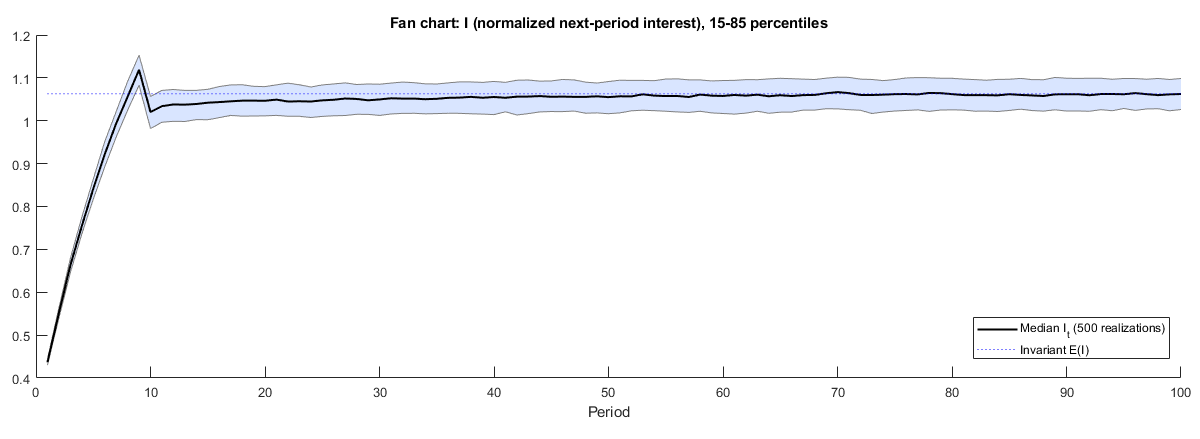}
\caption{\footnotesize Fan-chart of 15th to 85th percentile of realized $I_t$ paths, along with median (solid line) and theoretical $E(I)$ (dotted line).}
\label{fig:sim_fanI}
\end{figure}

\noindent
\begin{figure}[H]
\centering
\includegraphics[scale=.65]{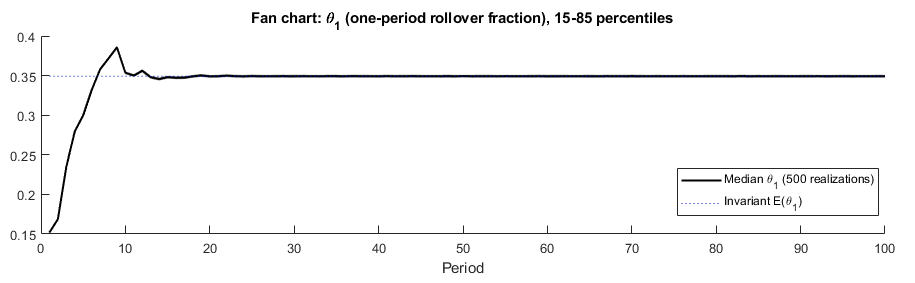}
\caption{\footnotesize Fan-chart of 15th to 85th percentile of realized $(Q_1/Q)_t$ paths, along with median (solid line) and theoretical $\theta_1 
= E(Q_1/Q)$ (dotted line).}
\label{fig:sim_fanTheta}
\end{figure}

It is seen that the empirically-estimated metrics from Monte Carlo simulation bracket the true invariant expectations, as expected.

\subsection{Cost vs. risk}

Figure~\ref{fig:costVrisk} presents a scatterplot of an ergodic-mean cost proxy $E(I)/E(Q)$ vs. risk proxy $\theta_1$ (the one-period
rollover fraction), as well as 
Monte Carlo approximations to these from a $N=500$ empirical simulation, for various representative choices of issuance 
allocations $f = (f_1, f_3, f_{10})^T$. These include the short/belly/long-tiled allocations depicted previously, as well as the baseline
allocation $f = (0.4,0.5,0.1)^T$ of this section.

One can see the close agreement between the results of forward-simulation and the analytical ergodic-mean formulas.
\noindent
\begin{figure}[H]
\centering
\includegraphics[scale=.65]{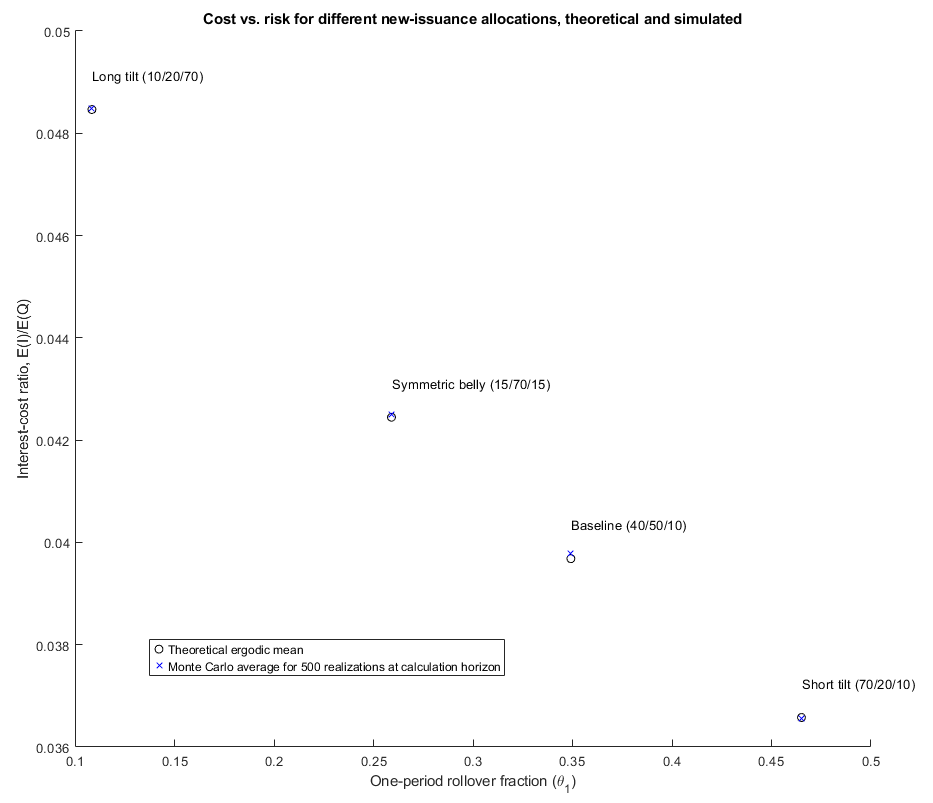}
\caption{\footnotesize Theoretical and simulated cost- vs. risk-proxy scatterplot for various new-issuance allocations. Allocations shown indicate the 
percentage of new-issuance at the 1-, 3-, and 10-year tenors respectively. Here cost is 
proxied by $E(I)/E(Q)$ (expected normalized next-period interest vs. normalized debt-level), and risk is proxied by 
$E(Q_1/Q) = \theta_1$, the expected one-period rollover fraction. Monte Carlo-approximated averages 
for $N=500$ realizations at the $t=100$ calculation horizon
show close agreement with theoretical ergodic means computable a priori via derived analytical formulae.}
\label{fig:costVrisk}
\end{figure}

\subsection{Correlation effect}

Figure~\ref{fig:costVrho} illustrates how the interest-cost ratio $E(I)/E(Q)$ varies with the rate-deficit correlation 
assumption $\rho$. One can observe the expected result that a countercyclical policy-rate assumption (i.e. $\rho<0$) leads to a lower
long-term interest-cost ratio all else equal, due to periods of higher deficits tending to coincide with those of lower interest-rates.
\noindent
\begin{figure}[H]
\centering
\includegraphics[scale=.65]{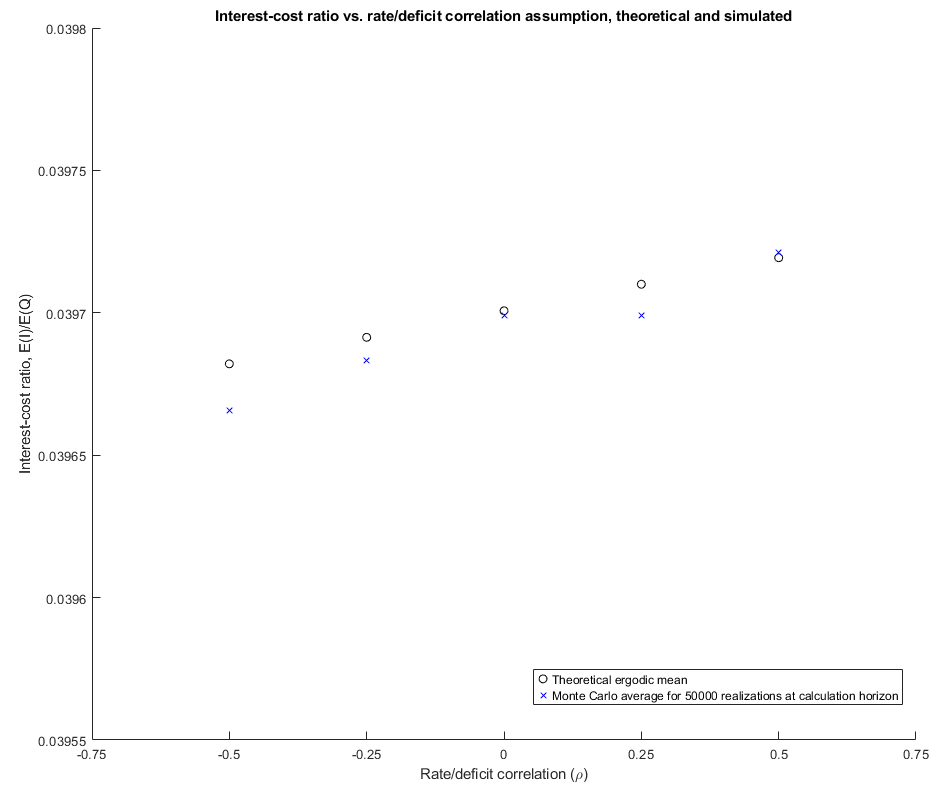}
\caption{\footnotesize Dependence of the interest-cost ratio $E(I)/E(Q)$ on rate-deficit correlation 
assumption $\rho$. Theoretical and Monte Carlo-approximated results (for $N=50000$) are shown.}
\label{fig:costVrho}
\end{figure}

Note that because the effect was so slight we have increased the number of realizations ($N=50000$) to better demonstrate 
simulation approximations approaching
the analytical formula. For completeness, Table~\ref{table:wacVrho} reports 
the dependence of $E(I)/E(Q)$ on $\rho$ for the baseline parameters and new-issuance allocation $f = (0.4, 0.5, 0.1)^T$ of
this section. For the baseline parameters of this section the
effect on expected invariant interest-cost ratio is about $0.04$ basis-points (i.e. $0.0004\%$) per $10\%$ increase in $\rho$.
\noindent
\begin{table}[H]
\centering
\caption{Dependence of interest-cost ratio ($E(I)/E(Q)$) on rate-deficit correlation assumption ($\rho$) for baseline
model and new-issuance allocation, showing the expected directional dependence (albeit slight) that reducing $\rho$ decreases interest-cost. The 
strength of this relationship could vary with other assumptions of the model such
as assumed new-issuance allocation, deficit growth trend, and the slope of the mean yield-curve.}
\label{table:wacVrho}
\begin{tabular}{c c}
\toprule
\textbf{Rate-deficit correlation, $\rho$ } & \textbf{Invariant interest-cost ratio, $E(I)/E(Q)$} \\
\midrule
$-50\%$ & $3.9682\%$   \\
$-25\%$ & $3.9691\%$   \\
$-0\%$ & $3.9701\%$   \\
$25\%$ & $3.9710\%$   \\
$50\%$ & $3.9719\%$   \\
\bottomrule
\end{tabular}
\end{table}

\subsection{Optimization calculations}

In this section we show results of running the optimization program described in section~\ref{sec:opt} on our baseline example. For
this exercise we used the simple interest-cost ratio $J(f):= E(I) / E(Q)$ as objective function. (Results of using $J(f)=E(I)$ or $J(f)=E(Q)$ were not
materially different.) 

To generate a frontier, a maximum is applied to the expected one-period rollover ratio as 
risk-proxy, $\theta_1(f) \le R$. A lower-bound of $5\%$ was enforced
on each of $f_1,f_3,f_{10}$ to avoid trivial corner solutions. The optimal allocations $f^*(R)$, admissible by this critierion, that minimize
$J(f)$ for 
$R \in (0,0.5]$ were computed and are represented in Figure~\ref{fig:opt}.
\noindent
\begin{figure}[H]
\centering
\includegraphics[scale=.8]{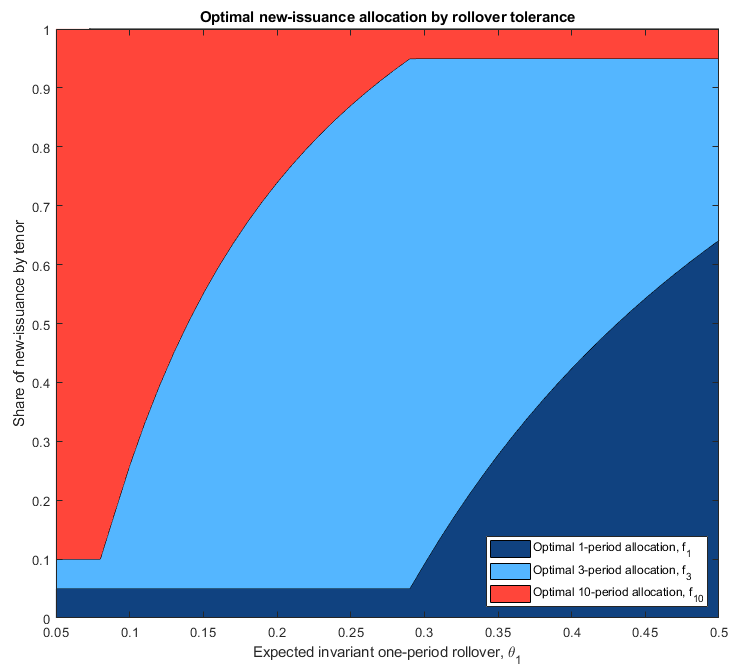}
\caption{\footnotesize Optimal new-issuance allocations $f$ of baseline model, by maximum constraint on $\theta_1$ and maintaining
$f \ge 0.05$.}
\label{fig:opt}
\end{figure}

The behavior of this optimization exercise is seen to 
follow a `waterfall' pattern familiar from the deterministic baseline model. Generally, absent any constraint (i.e. if $R$ is large), 
optimal allocation would favor shorter borrowing (large $f_1$) to minimize cost, due to the upward-sloping expected 
yields $\bar{r}$. As risk tolerance $R$ is reduced, the appetite
for $f_1$ is reduced, because it leads to too-high rollover $\theta_1$, until it reaches its minimum ($5\%$ in this case). 
Allocation toward the `belly' $f_3$ becomes favored as a cost-efficient balance. Finally, as $R$ is reduced further, $3$-period borrowing 
starts to also be inefficient for satisfying the risk tolerance, and $10$-period borrowing becomes necessary.

Finally, we explored the effect of varying rate-deficit correlation $\rho$ on optimization calculations. To illustrate this potential
dependence, Figure~\ref{fig:optByRho} shows
the optimal-calibrated one-period share $f_1$ by $\rho$ for an example case, where the constraint on one-period rollover is $R=0.3$. 

Results for all values of $R$ simulated on optimal $f_j$ (where nontrivial i.e. away from a corner solution) were similar. Correlation in this model may 
have
a nonzero effect on optimal allocation, but it is evidently not material to issuance policy
for the parameter range governing the baseline example considered in this section.
\noindent
\begin{figure}[H]
\centering
\includegraphics[scale=.6]{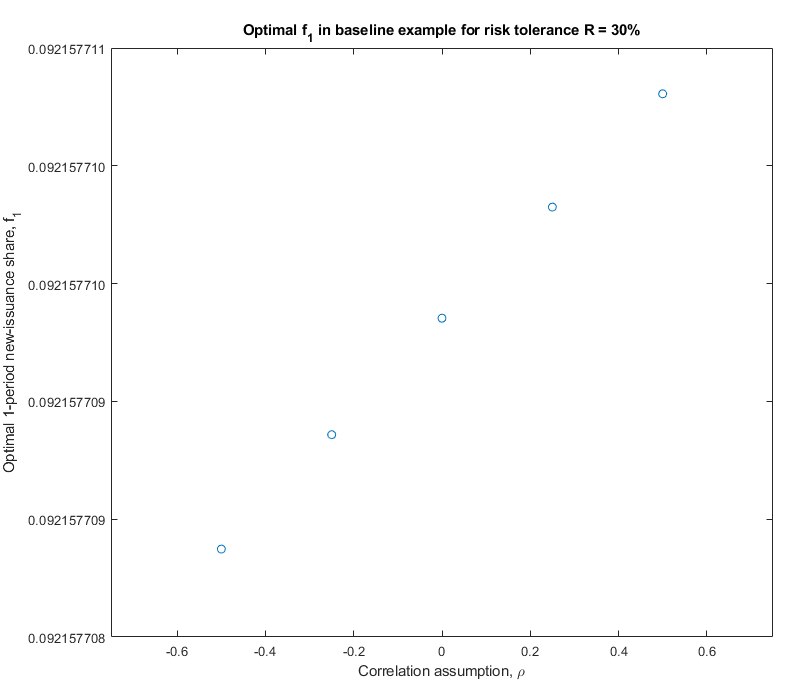}
\caption{\footnotesize Optimal one-period fraction $f_1$ in baseline model with rollover constraint $R=0.3$, while varying rate-deficit
correlation assumption $\rho$. An effect of this magnitude 
may be influenced by numerical tolerances of the optimization, and is evidently not material for issuance policy.}
\label{fig:optByRho}
\end{figure}

\section{Conclusions}
\label{sec:concl}

This piece advances the understanding of non-defaulting, regular-issuance sovereign 
debt dynamics by integrating a fully granular maturity structure with stochastic, correlated interest-rate
and deficit drivers in a discrete-time,
future-cashflow-representation recursion framework, 
and establishing a sustainability-style condition (effectively, a particularized extension of $r<g$) for ergodic convergence. 

The future-cashflow representation used here brings clear analytical benefits: in this
state-space, debt
rollover is a shift operator; discounting (i.e. accounting/normalizing for deficit growth) is a diagonal operator; new issuance is a rank-one operator. In these coordinates
the system is a linear SRE, allowing for certain analytical results.

The ergodicity result has implications for a certain class of 
Monte Carlo debt simulation models driven by stochastic, correlated fiscal and interest rate paths.
When such models conform to the conditions described in this work, their resulting calculations should approximately
recover the invariant metrics of the underlying system, provided the simulation horizon
is sufficiently large for `burn-in' to wash away the effect of initial conditions. Cognizance of this property can help 
practitioners to predict and interpret the output of such
models, as well as how or whether they are likely to respond to changes in the assumed drivers such as interest rates, deficits, and
their correlations.

 In principle, the linear SRE structure, under Gaussian forcing, may permit analytic computation or approximation
 of higher moments such as variances, tail-risk measures, or impulse responses; the focus 
of this piece has been on means and steady ratios. Future work can expand on the simple, correlation-adjusted metrics listed here.

A potential avenue of improvement is to incorporate 
floating-rate debt (e.g. FRNs and TIPS, in the US context)
into the model, for which the future-cashflow state representation may prove beneficial.

Another natural extension of this framework is to more explicitly allow rates and deficits to be determined jointly based on underlying 
factors, as in an explicit macroeconomic model. Provided such drivers remain Gaussian (or similarly well-behaved) and the relationships linear, the 
core developments and ergodicity observations described above should remain robust to such a structure.

Although the main observation here is ergodic convergence, which depended only on the feedback function, the additional
assumptions about rate/deficit behavior could
obviously affect how long the is `burn-in' period and how volatile the system remains when close to equilibrium. Thus the speed of convergence 
and the variance of paths around their ergodic means, and 
the relationship of these behaviors to the volatility and persistence of the interest-rate
and deficit driving factors, is open for further investigation.

More challenging potential extensions to this model framework
include allowing for dynamic issuance policy, and to allow issuance policy to affect interest rates 
endogenously via the well-known
supply effect. Either would disallow much of the linear machinery and analysis developed above. For 
modest supply-feedback effects, it is possible the endogenity result could still be 
demonstrated under an appropriate feedback criterion, and that first-order perturbative approaches could be used, but closed-form solutions
are unlikely.

The granular sustainability condition developed here, in its own right, may prove 
a useful analytical tool in cross-country analysis and comparison of the fiscal and debt-management policy of sovereigns. 

Ultimately, the linear SRE approach described above can help serve as a unifying backbone for modeling and practical
debt management, one that preserves analytical 
clarity while accommodating the richer economic structure required for policy‑relevant debt analysis.




\begin{thebibliography}{99}

\bibitem[Belton(2018)]{belton2018}
Belton, T, Dawsey, K., Greenlaw, D., Li, H., Ramaswamy, S. \& Sack, B.
\newblock Optimizing the maturity structure of U.S. Treasury debt: A model-based framework.
\newblock Brookings paper, available at \url{https://www.brookings.edu/articles/optimizing-the-maturity-structure-of-u-s-treasury-debt/}.



\bibitem[Blanchard(2019)]{blanchard2019}
Blanchard, O. (2019).
\newblock Public debt and low interest rates.
\newblock \emph{American Economic Review Papers and Proceedings}, 109, 1--24.

\bibitem[Bolder(2003)]{bolder2003}
Bolder, D. (2003)
\newblock A Stochastic Simulation Framework for the Government of Canada's Debt Strategy.
\newblock \emph{Bank of Canada Working Paper}, 2003.

\bibitem[Bolder(2011)]{bolder2011}
Bolder, D. and Deeley, S. (2011)
\newblock The Canadian Debt-Strategy Model: An Overview of the Principal Elements.
\newblock \emph{Bank of Canada Working Paper}, 2011.



\bibitem[Bougerol and Picard(1992)]{bougerol1993products}
Bougerol, P. and Picard, N. (1992).
\newblock Strict stationarity of generalized autoregressive processes.
\newblock {\em The Annals of Probability}, pages 1714--1730.

\bibitem[Brandt(1986)]{brandt1986}
Brandt, A. (1986).
\newblock The stochastic stability of autoregressive processes with random coefficients.
\newblock \emph{Journal of Multivariate Analysis}, 19, 1--24.

\bibitem[Cameron(2018)]{cameron2018}
Cameron, C. (2018).
\newblock Visualizing Treasury Issuance Strategy. White paper, available at \url{https://ssrn.com/abstract=3120036}.


\bibitem[Consiglio(2012)]{consiglio2012}
Consigliio, A. and Staino, A. (2012).
\newblock A Stochastic Programming Model for the Optimal Issuance of Government Bonds.
\newblock \emph{Annals of Operations Research}, March 2012.


\bibitem[Cottarelli and Jaramillo(2012)]{cottarelli2012fiscal} Cottarelli, C., \& Jaramillo, L. (2012). 
\newblock Walking hand in hand: Fiscal policy and growth in advanced economies. 
\newblock IMF Working Paper WP/12/137, International Monetary Fund, May 2012.

\bibitem[Escolano(2010)]{escolano2010}
Escolano, A. (2010).
\newblock A practical guide to public debt dynamics, fiscal sustainability, and cyclical adjustment of budgetary aggregates.
\newblock IMF Technical Paper.



\bibitem[Hairer and Mattingly(2011)]{hairer2011yet} Hairer, M., \& Mattingly, J.~C. (2011). 
\newblock Yet another look at Harris' ergodic theorem for Markov chains.
\newblock In \emph{Probability and Mathematical Statistics}, 63, 109--117.


\bibitem[Horn and Johnson(2012)]{hornjohnson2012}
Horn, R.~A., \& Johnson, C.~R. (2012).
\newblock \emph{Matrix Analysis} (2nd ed.).
\newblock Cambridge University Press.




\bibitem[Landoni et al(2019)]{landoni2019}
Landoni, M., Smith, W.T. \& Cameron, C. (2019).
\newblock Linking policy to outcomes: a simple framework for debt maturity management. White paper, available 
at \url{https://ssrn.com/abstract=3300347}.

\bibitem[Meyn and Tweedie(1993)]{meyntweedie1993}
Meyn, S.P. and Tweedie, R.L. (1993).
\newblock \emph{Markov Chains and Stochastic Stability}.
\newblock Springer-Verlag.


\bibitem[TBAC(2018)]{tbac2018q4}
Treasury Borrowing Advisory Committee.
\newblock \emph{Discussion Charts by Calendar Year, 4th Quarter
2018.} Presentation to US Treasury, available 
at \url{https://home.treasury.gov/system/files/276/CombinedChargesforArchives4thqtr2018.pdf}.




\bibitem[TBAC(2019)]{tbac2019q2}
Treasury Borrowing Advisory Committee.
\newblock \emph{Discussion Charts by Calendar Year, 2nd Quarter
2019.} Presentation to US Treasury, available 
at \url{https://home.treasury.gov/system/files/221/q22019CombinedChargesforArchives.pdf}.




\end{thebibliography}
\end{document}